\newcommand{\calL}{\ensuremath{{\cal L}}}
\newcommand{\calR}{\ensuremath{{\cal R}}}
\newtheorem{theorem}{Theorem}[section]
\newtheorem{lemma}{Lemma}[section]
\newenvironment{proof}{{\bf Proof:} }{\hspace*{\fill}$\Box$\vspace{2mm}}
\newtheorem{conjecture}{Conjecture}[section]
\begin{document}
\addtocounter{page}{-1}

\title{\LARGE\bf Computing Cartograms with Optimal Complexity}

\author{
      \large Md.~Jawaherul~Alam$^1$\\[+1mm]
%      \sf    Department of Computer Science,\\
      \sf    University of Arizona,\\
      \sf    Tucson, AZ, USA\\
        {\tt mjalam@email.arizona.edu}
\and
      \large Therese~Biedl$^2$\\[+1mm]
%      \sf    School of Computer Science,\\
      \sf    University of Waterloo,\\
      \sf    Waterloo, ON N2L 3G1, Canada\\
        {\tt biedl@uwaterloo.ca}
\and
      \large Stefan Felsner$^3$\\[+1mm]
%      \sf    Institut f\"ur Mathematik,\\
      \sf    Technische Universit\"at Berlin,\\
      \sf    Berlin, Germany\\
        {\tt felsner@math.tu-berlin.de}
\and
      \large Michael~Kaufmann\\[+1mm]
%      \sf    Wilhelm-Schickhard-Institut f\"ur Informatik,\\
      \sf    Eberhard Karls Universit\"at T\"ubingen,\\
      \sf    T\"ubingen, Germany\\
        {\tt mk@informatik.uni-tuebingen.de}
\and
      \large Stephen~G.~Kobourov$^1$\\[+1mm]
%      \sf    Department of Computer Science,\\
      \sf    University of Arizona,\\
      \sf    Tucson, AZ, USA\\
        {\tt kobourov@cs.arizona.edu}
\and
      \large Torsten~Ueckerdt$^3$\\[+1mm]
%      \sf    Institut f\"ur Mathematik,\\
      \sf    Technische Universit\"at Berlin,\\
      \sf    Berlin, Germany\\
        {\tt ueckerdt@math.tu-berlin.de}
}
\date{}

\maketitle
\let\thefootnote\relax\footnotetext{

\noindent$^*$ This research was initiated at the Dagstuhl Seminar 10461 on Schematization.

\noindent$^1$ Research funded in part by NSF grants CCF-0545743 and CCF-1115971.

\noindent$^2$ Research supported by NSERC.

\noindent$^3$ Research partially supported by EUROGIGA project GraDR and DFG Fe 340/7-2.
}
\begin{abstract}
\thispagestyle{empty}
In a rectilinear dual of a planar graph vertices are
represented by simple rectilinear polygons and edges are
represented by side-contact between the corresponding polygons.
A rectilinear dual is called a cartogram if the area of each region is
equal to a pre-specified weight of the corresponding vertex. The
complexity of a cartogram is determined by the maximum number of
corners (or sides) required for any polygon. In a series of papers the
polygonal complexity of such representations for maximal planar graphs
has been reduced from the initial 40 to 34, then to 12 and very
recently to the currently best known 10.  Here we describe a
construction with 8-sided polygons, which is optimal in
terms of polygonal complexity as 8-sided polygons are sometimes
necessary. Specifically, we show how to compute the combinatorial
structure and how to refine the representation into an area-universal
rectangular layout in linear time. The exact cartogram can be computed
from the area-universal rectangular layout with numerical iteration,
or can be approximated with a hill-climbing heuristic.

We also describe  an alternative construction for Hamiltonian maximal
planar graphs, which allows us to directly compute the cartograms in
linear time. Moreover, we prove that even for Hamiltonian graphs 8-sided
rectilinear polygons are necessary, by constructing a non-trivial
lower bound example. The complexity of the cartograms can be reduced
to 6 if the Hamiltonian path has the extra property that it is
one-legged, as in outer-planar graphs. Thus, we have optimal
representations (in terms of both polygonal complexity and running
time) for Hamiltonian maximal planar and maximal outer-planar graphs.
\end{abstract}

\newpage

\section{Introduction}
\label{sec:intro}

There is a large body of work about representing planar graphs as
contact graphs, i.e., graphs whose vertices are represented by
geometrical objects with edges corresponding to two objects touching
in some specified fashion. Typical classes of objects might be curves,
line segments,  or polygons. An early result is Koebe's 1936 theorem~\cite{Koebe36} that all planar graphs can be represented by
 touching disks.

In this paper, we consider contact representations of planar graphs, with vertices represented by simple interior-disjoint polygons
 and adjacencies represented by a non-trivial contact (shared boundary)
between the
 corresponding polygons. We are specifically interested in the
 rectilinear weighted version where the vertices are represented by
 simple (axis-aligned) rectilinear polygons.
This type of a representation is known
 as a {\em rectilinear dual} of the input planar graph.

 In the weighted version, the input is a planar graph $G=(V,E)$ along with a weight function $w:V(G)\rightarrow \mathbb{R}^+$ that
 assigns a weight to each vertex of $G$. A rectilinear dual is called a \textit{cartogram} if the area
 of each region is equal to the pre-specified weight of the
 corresponding vertex.  Such representations have practical applications in
 cartography~\cite{r-rsc-34}, geography~\cite{Tobler04thirtyfive} and sociology~\cite{HK98}, but also in VLSI Layout and
 floor-planning~\cite{MCP2002}.
Other applications can be found in visualization of relational data,
where using the adjacency of regions to represent edges in a graph can lead
 to a more compelling visualization than just drawing a line segment
 between two points~\cite{Buchsbaum08}.

 For rectilinear duals (unweighted) and for cartograms (weighted) it
 is often desirable, for aesthetic, practical and cognitive reasons,
 to limit the {\em polygonal complexity} of the representation,
 measured by the number of sides (or by the number of corners).
 Similarly, it is also desirable to minimize the unused area in the
 representation, also known as {\em holes} in floor-planning and VLSI
 layouts. A given rectilinear dual is {\em area-universal} if it can
 realize a cartogram with any pre-specified set of weights for the
 vertices of the graph without disturbing the underlying adjacencies
 and without increasing the polygonal complexity.

With these considerations in mind, we study the problem of
constructing area-universal rectilinear duals and show how to compute
cartograms with worst-case optimal polygonal complexity and without
any holes.

\subsection{Related Work}

In our paper and in most of the other papers cited here, ``planar graph''
refers to an inner-triangulated planar graph with a simple outer-face;
the former restriction is required if at most three rectilinear polygons are
allowed to meet in a point and the latter restriction is customary to
achieve that the union of all the polygons in the representation is a
rectangle.

Rectilinear duals (unweighted) were first studied in graph
theoretic context, and then with renewed interest in the context of
VLSI layouts and floor planning. It is known that $8$ sides are sometimes
necessary and always sufficient~\cite{He99,Liao03,ys93}.

The case when the rectilinear
polygons are restricted to rectangles has been of particular interest and
there are several (independent) characterizations of the class of
planar graphs that allows such {\em rectangular duals}~\cite{u-drg-53,LL-abrf-84,kk-rdpg-85}.
A historical overview
and a summary of the state of the art in the rectangle contact graphs
literature can be found in Buchsbaum {\em et al.}~\cite{Buchsbaum08}.

In the above results on rectilinear duals and rectangular duals, the areas of
the polygons are not considered; that is, these
results deal with the unweighted version of the problem.
The weighted version dates back to 1934 when Raisz described rectangular
cartograms~\cite{r-rsc-34}.
Algorithms by van Kreveld and Speckmann~\cite{ks07} and
Heilmann {\em et al.}~\cite{hkps04} yield representations with
touching rectangles but the adjacencies may be disturbed and
there may also be a small distortions of the weights. Recently, Eppstein
{\em et al.}~\cite{EMVS} characterized the class of planar graphs that
have area-universal rectangular duals.
The construction of the actual cartogram, given the area-universal
rectilinear dual and the weight function, can be accomplished
using a result by Wimer \textit{et al.} \cite{WKC}, which in turn requires numerical iteration.

The result of Eppstein {\em et al.} above is restricted to planar
graphs that have rectangular duals. Going back to the more general
rectilinear duals, leads to a series of papers where the
main goal has been to reduce the polygonal complexity while respecting
all areas and adjacencies.
 De Berg~{\em et al.}
initially showed that 40 sides suffice~\cite{deBerg07}. This was later
improved to $34$ sides~\cite{Nagamochi}. In a recent paper~\cite{BR-WADS11}
the polygonal
complexity was reduced to 12 sides and even more recently to 10
sides~\cite{ABFGKK11}.

Side contact representations of planar graphs have also been
studied without the restriction to rectilinear polygons. In the
unweighted case 6-sided polygons are sometimes necessary
and always sufficient~\cite{ghkk10}. The constructive upper bound relies on convex 6-sided polygons.
In the weighted version,
where the area of each polygon is prescribed, examples are known for which
polygons with 7~sides are necessary~\cite{ueck-phd}.
This lower bound is matched by constructive upper bound of 7~sides if holes are allowed~\cite{AlaBieFelKauKob11}.
In the same paper it is shown that even allowing arbitrarily high polygonal complexity and holes of
arbitrary size, there exist examples with prescribed areas which cannot be represented with convex polygons. If holes
are not allowed then the best previously known polygonal complexity is~10, and it is achieved with rectilinear polygons~\cite{ABFGKK11}.

\subsection{Our Results}

Recall that the known lower bound on the polygonal complexity even for
unweighted rectilinear duals is 8
while the best known upper
bound is 10. Here we present
the first construction that matches the lower bound.
Specifically, our construction produces 8-sided area-universal rectilinear duals
 in linear time, and is thus optimal in terms of polygonal
 complexity.  The exact cartogram can be computed from the area-universal rectangular layout with numerical iteration, or can be approximated with a hill-climbing heuristic.

For Hamiltonian maximal planar graphs we have an alternative construction
which allows us to directly compute cartograms with 8-sided rectilinear
polygons in linear time. Moreover, we prove that 8-sided rectilinear polygons
are necessary by constructing a non-trivial lower bound example.  If the
Hamiltonian path has the extra property that it is one-legged, then we can
reduce the polygonal complexity and realize cartograms with 6-sided polygons.
This can be used to obtain 6-sided cartograms of maximal outer-planar graphs.
Thus we have optimal (in terms of both polygonal complexity and running time)
representations for Hamiltonian maximal planar and maximal outer-planar
graphs.

\section{Preliminaries}
\label{sec:prelim}

A {\em planar graph}, $G=(V,E)$, is one that has a drawing without
crossing in the plane along with an embedding, defined via the cyclic ordering
of edges around each vertex. A {\em plane graph} is a fixed planar embedding of a planar graph.
It splits the plane into connected regions called {\em faces}; the unbounded region
is the {\em outer-face} and all other faces are called {\em interior faces}. A
planar (plane) graph is \textit{maximal} if no edge can be added to it without violating planarity.
Thus each face of a maximal plane graph is a triangle. A \textit{Hamiltonian cycle}
in a graph $G$ is a simple cycle containing all the vertices of $G$. A graph $G$ is
called \textit{Hamiltonian} if it contains a Hamiltonian cycle.

A set $P$ of closed simple
 interior-disjoint polygons with an isomorphism
 $\mathcal{P}:V\rightarrow P$ is a {\em polygonal contact
representation} of a graph
if for any two vertices
 $u, v \in V$ the boundaries of $\mathcal{P}(u)$ and $\mathcal{P}(v)$ share a non-empty
 line-segment if and only if $(u,v)$ is an edge.
Such a representation is known
 as a {\em rectilinear dual} of the input graph if polygons in $P$ are
 rectilinear.
 In the weighted version the input is the graph $G$, along with a weight function $w:V(G)\rightarrow \mathbb{R}^+$ that
 assigns a weight to each vertex of $G$. A rectilinear dual is called a \textit{cartogram} if the area
 of each polygon is equal to the pre-specified weight of the corresponding vertex.
 We define the \textit{complexity of a polygon} as the number of sides it has.
A common objective is to realize a given graph and a set of weights,
using polygons with minimal complexity.

\subsection{Canonical Orders and Schnyder Realizers}

Next we briefly summarize the concepts of a ``canonical order'' of a
planar graph~\cite{fpp-hdpgg-90} and that of a  ``Schnyder  realizer''~\cite{s-epgg-90}.
Let $G=(V, E)$ be a maximal plane graph with outer vertices $u$, $v$, $w$ in clockwise order. Then
 we can compute in linear time~\cite{cp-ltadp-95} a \textit{canonical order} or \textit{shelling order} of
 the vertices $v_1 = u$, $v_2 = v$, $v_3$, $\ldots$, $v_n = w$, which is defined as one that meets the
 following criteria for every $4\le i\le n$.

\begin{itemize}
    \item The subgraph $G_{i-1}\subseteq G$ induced by $v_1$, $v_2$, $\ldots$, $v_{i-1}$ is
     biconnected, and the boundary of its outer face is a cycle $C_{i-1}$ containing the edge
     $(u, v)$.
    \item The vertex $v_i$ is in the exterior face of $G_{i-1}$, and its neighbors in $G_{i-1}$
    form an (at least 2-element) subinterval of the path $C_{i-1}-(u, v)$.
\end{itemize}

A \textit{Schnyder realizer} of a maximal plane graph $G$ is a partition of the interior
 edges of $G$ into three sets $\mathcal{S}_1$, $\mathcal{S}_2$ and $\mathcal{S}_3$ of
 directed edges such that for each interior vertex $v$, the following conditions hold:

\begin{itemize}
    \item $v$ has out-degree exactly one in each of $\mathcal{S}_1$, $\mathcal{S}_2$ and $\mathcal{S}_3$,
    \item the counterclockwise order of the edges incident to $v$ is: entering $\mathcal{S}_1$, leaving $\mathcal{S}_2$,
    entering $\mathcal{S}_3$, leaving $\mathcal{S}_1$, entering $\mathcal{S}_2$, leaving $\mathcal{S}_3$.
\end{itemize}

Schnyder proved that any maximal plane graph has a Schnyder realizer
and it can be computed in $O(n)$ time~\cite{s-epgg-90}. The first
condition implies that $\mathcal{S}_i$, for $i=1, 2, 3$ defines a tree rooted at exactly
 one exterior vertex and containing all the interior vertices such that the edges are directed towards the root.
 Denote by $\Phi_k(v)$ the parent of vertex $v$ in tree $T_k$.
The following well-known lemma shows a profound connection between canonical orders and Schnyder realizers.

\begin{lemma}
\label{lemma:can-schny} Let $G$ be a maximal plane graph. Then the following hold.
    \begin{enumerate}
        \item[(a)] A canonical order of the vertices of $G$ defines a Schnyder realizer of $G$, where the
            outgoing edges of a vertex $v$ are to its first and last predecessor (where ``first'' is
            w.r.t. the clockwise order around $v$), and to its highest-numbered successor.
        \item[(b)] A Schnyder realizer with trees $S_1$, $S_2$, $S_3$ defines a canonical order,
            which is a topological order of the acyclic graph $\mathcal{S}_1^{-1}\cup \mathcal{S}_2^{-1} \cup \mathcal{S}_3$,
            where $\mathcal{S}_k^{-1}$ is the tree $\mathcal{S}_k$ with the direction of all its edges reversed.
    \end{enumerate}
\end{lemma}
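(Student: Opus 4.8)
The plan is to prove the two directions separately, in each case matching the defining conditions against an induction along the order.

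For part (a) I would fix a canonical order $v_1,\dots,v_n$ (with $v_1=u$, $v_2=v$, $v_n=w$) and take as $\mathcal{S}_1,\mathcal{S}_2,\mathcal{S}_3$ exactly the edge sets described: from each interior vertex direct one edge to its first predecessor, one to its last predecessor, and one to its highest-numbered successor. First I would verify the out-degree condition. When $v_i$ is inserted, its neighbours in $G_{i-1}$ form a subinterval $w_1,\dots,w_k$ (in clockwise order, $k\ge 2$) of $C_{i-1}-(u,v)$; hence the first predecessor $w_1$ and the last predecessor $w_k$ are well defined and distinct, and the edge to the highest-numbered successor is distinct from both since that neighbour is added strictly later. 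The three outer vertices serve as the roots: $v_1,v_2$ emit no edge of the corresponding predecessor trees and $v_n$ is the apex of the successor tree, so each interior vertex has out-degree exactly one in each of $\mathcal{S}_1,\mathcal{S}_2,\mathcal{S}_3$.

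The substance of part (a) is the cyclic condition, which I would check locally at each $v_i$ by separating the edges present at insertion time from those added afterwards. At insertion, the two outgoing edges to $w_1$ and $w_k$ bound, on the lower side of $v_i$, the block of incoming $\mathcal{S}_3$-edges coming from the middle predecessors $w_2,\dots,w_{k-1}$, for each of which $v_i$ is the highest-numbered successor. All remaining edges at $v_i$ run to later vertices and lie on the upper side; among them sit the single outgoing successor edge, the incoming edges of vertices that later claim $v_i$ as a first predecessor, and those that claim it as a last predecessor, in the order forced by the planar rotation. Matching these six blocks to the prescribed pattern (entering $\mathcal{S}_1$, leaving $\mathcal{S}_2$, entering $\mathcal{S}_3$, leaving $\mathcal{S}_1$, entering $\mathcal{S}_2$, leaving $\mathcal{S}_3$) is the main obstacle: it demands careful bookkeeping of the rotation system together with an induction to control the later-added neighbours, since the identity of a vertex's highest-numbered successor, and of the vertices that claim it as first or last predecessor, is only pinned down once the full order is revealed.

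For part (b) I would first argue that $D:=\mathcal{S}_1^{-1}\cup\mathcal{S}_2^{-1}\cup\mathcal{S}_3$ is acyclic; this is a standard consequence of the Schnyder conditions, as the three trees oriented this way admit no common directed cycle, so a topological order $v_1,\dots,v_n$ of $D$ exists. It then remains to show that any such order is canonical, which I would do by induction on $i$, maintaining that $G_{i-1}$ is biconnected with outer cycle $C_{i-1}$ containing $(u,v)$. The key point is that the earlier neighbours of $v_i$ are precisely its $\mathcal{S}_1$- and $\mathcal{S}_2$-outgoing edges together with its $\mathcal{S}_3$-incoming edges; by the Schnyder cyclic condition these occupy a contiguous arc of the rotation at $v_i$, which I would translate into $v_i$'s predecessors forming a subinterval of $C_{i-1}-(u,v)$ of size at least two. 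Verifying that placing $v_i$ over this arc keeps the outer boundary a cycle through $(u,v)$ closes the induction and establishes both canonical-order criteria.
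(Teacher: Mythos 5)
The first thing to note is that the paper contains no proof of this lemma to compare against: it is introduced as ``the following well-known lemma'' and stated bare, with the canonical-order and Schnyder-realizer references standing in for a proof. So your proposal can only be judged on its own terms, and its skeleton is indeed the standard folklore argument. Your structural identifications are correct: in (a), the edges to the first and last predecessor flank, on the lower side of $v_i$, the incoming $\mathcal{S}_3$-edges of the middle predecessors (each of which really does have $v_i$ as its highest-numbered successor, because it becomes an interior vertex the moment $v_i$ is placed -- this is exactly the fact the paper later invokes when citing the lemma), and all remaining edges at $v_i$ go to successors; in (b), the neighbors preceding $v_i$ in a topological order of $\mathcal{S}_1^{-1}\cup\mathcal{S}_2^{-1}\cup\mathcal{S}_3$ are exactly $\Phi_1(v_i)$, $\Phi_2(v_i)$ and the $\mathcal{S}_3$-children of $v_i$, and by the cyclic condition these occupy a contiguous arc of the rotation at $v_i$.

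However, as a proof the proposal has two genuine gaps, sitting exactly where the content is. First, in (a) you declare the verification of the cyclic pattern to be ``the main obstacle'' that ``demands careful bookkeeping'' -- and then stop. The missing step is a concrete induction on the insertion process: as long as $v_i$ lies on the current outer cycle, all of its not-yet-placed neighbors lie in the outer face; a successor that takes $v_i$ as its last predecessor becomes $v_i$'s new outer-cycle neighbor on that side, so its edge enters the rotation at $v_i$ immediately counterclockwise of the previously added one; symmetrically for successors taking $v_i$ as first predecessor on the other side; and the unique successor that covers $v_i$ contributes the outgoing $\mathcal{S}_3$-edge between the two stacks. This invariant is what forces the counterclockwise pattern leaving $\mathcal{S}_1$, entering $\mathcal{S}_2$, leaving $\mathcal{S}_3$, entering $\mathcal{S}_1$, leaving $\mathcal{S}_2$, entering $\mathcal{S}_3$ (a cyclic shift of the pattern in the definition); without it, part (a) is a description of what must be checked, not a check. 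Second, in (b) you justify acyclicity of $\mathcal{S}_1^{-1}\cup\mathcal{S}_2^{-1}\cup\mathcal{S}_3$ by saying the three trees so oriented ``admit no common directed cycle,'' which is a restatement of acyclicity, not a reason for it. That the union $\mathcal{S}_k\cup\mathcal{S}_{k+1}^{-1}\cup\mathcal{S}_{k+2}^{-1}$ is acyclic is a genuine theorem of Schnyder-wood theory and needs either an explicit citation or an argument (for instance: a directed cycle enclosing a minimal number of faces must contain a vertex at which the two cycle edges contradict the prescribed rotation of entering and leaving edges). Relatedly, your last step in (b) -- passing from ``the early neighbors form a contiguous arc at $v_i$'' to ``they form a subinterval of the outer path of $C_{i-1}$'' -- silently uses that $v_i$ and all later vertices lie in the outer face of $G_{i-1}$, which is itself part of what your induction must maintain rather than something that can be read off the rotation alone.
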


\section{Cartograms with 8-Sided Polygons}
\label{sec:eight-side}

In this section we show that 8-sided polygons are always sufficient and sometimes necessary for a
 cartogram of a maximal planar graph.
Our algorithm for constructing 8-sided area-universal rectilinear duals has
three
main phases. In the first phase we create a contact
 representation of the graph $G$, where each vertex of $G$ is represented by
an upside-down \textbf{T}, i.e., a horizontal segment
 and a vertical segment.
 Figures~\ref{fig:8-side-illus}(a)-(b) show a maximal planar graph and
 its contact representation using \textbf{T}'s, where the three ends of each \textbf{T} are marked
with arrows. In the second phase we make both the
horizontal and vertical segments of each \textbf{T} into thin polygons
with $\lambda$ thickness for some $\lambda >0$. We then have a contact
 representation of $G$ with $T$-shaped polygons as illustrated in
 Figure~\ref{fig:8-side-illus}(c). In the third phase
 we remove all the unused area in the representation by assigning each
 (rectangular) hole to one of the polygons adjacent to it, as
 illustrated in Figure~\ref{fig:8-side-illus}(d). We show that the
 resulting representation is an area-universal rectilinear dual of $G$
 with polygonal complexity 8, as illustrated in Figure~\ref{fig:8-side-illus}(e).

\begin{figure}[htbp]
\centering
\includegraphics[width=0.9\textwidth]{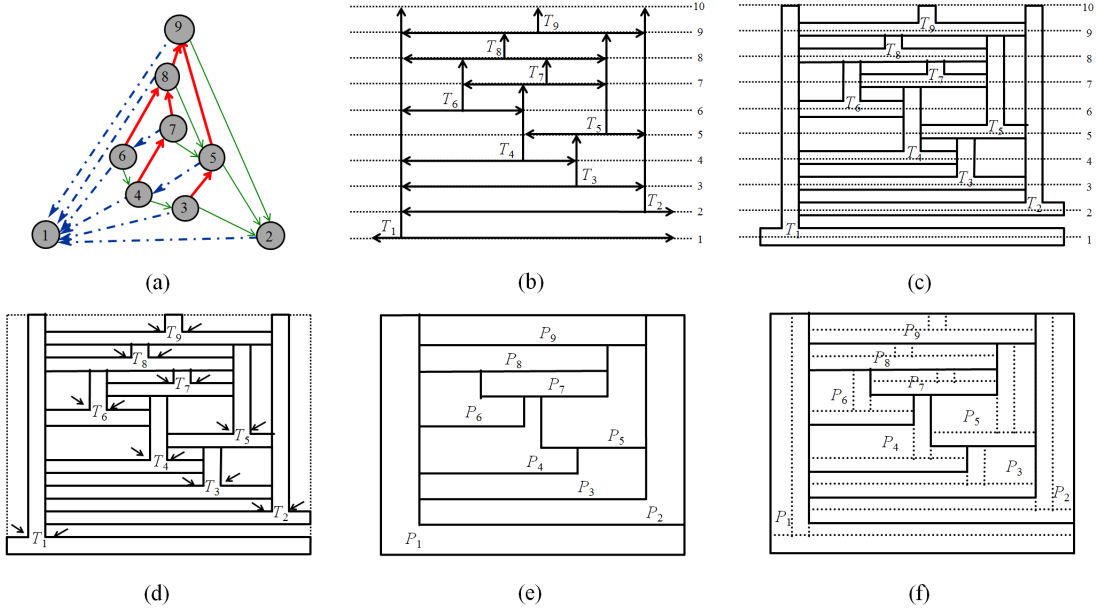}
\caption{\small\sf Construction of a rectilinear dual of a maximal planar graph with 8-sided polygons.}
\label{fig:8-side-illus}
\end{figure}

\subsection{Constructing Contact Representation with \textbf{T}'s}

Our contact representation with \textbf{T}'s is similar to the
approach described by de Fraysseix \textit{et al.}~\cite{FMR04}.

Let $G$ be a planar graph.  As mentioned earlier, we may assume
that $G$ is internally triangulated and has a simple outer-face.
If need be, we can add two vertices (which we later choose as $v_1$ and
$v_2$) and connect them to the outer-face to ensure that the graph is maximal. Now let $v_1$, $v_2$, $v_3$, $\ldots$, $v_n$ be a canonical order of the
 vertices in $G$ with corresponding Schnyder trees $\mathcal{S}_1$,
 $\mathcal{S}_2$ and $\mathcal{S}_3$ rooted at $v_1$, $v_2$ and $v_n$,
 Add to $\mathcal{S}_1$ the
 two edges $(v_2, v_1)$ and $(v_n, v_1)$ oriented towards $v_1$ and add to $\mathcal{S}_2$ the edge $(v_n, v_2)$
 oriented towards $v_2$.
In what follows, we sometimes identify vertex $v_i$ with its canonical
label $i$.

We assign to vertex $i$ the \textbf{T}-shape $T_i$
consisting horizontal and vertical segments $h_i$ and $b_i$.
Begin by placing $T_1$ and $T_2$ so that $h_1$ is placed at $y=1$, $h_2$ is placed at $y=2$, the
 topmost points of both $b_1$ and $b_2$ have $y$-coordinate $n+1$ and the leftmost point of the $h_2$ touches $b_1$.
 Next the algorithm iteratively constructs the contact representation
 by defining $T_k$ so that $h_k$ is placed at $y=k$ and the topmost point of $b_k$ has
 $y$-coordinate $\Phi_3(k)$ for $3\le k<n$.  After the $k$-th step of
 the algorithm we have a contact representation of $G_k$, and we
 maintain the invariant that the order of the
 vertical segments with non-empty parts in the half-plane $y>k$ corresponds to the same circular order of the
 vertices along $C_k-(v_1, v_2)$.

Consider inserting $T_k$ for $v_k$. The neighbors $v_{k_1}$, $v_{k_2}$, $\ldots$, $v_{k_d}$
 of $v_k$ in $G_{k-1}$ form a subinterval of $C_{k-1}-(v_1, v_2)$ and hence the corresponding vertical segments are
 also in the same order in the half-plane $y>k-1$ of the representation of $G_{k-1}$. Since $v_k=\Phi_3(v_{k_i})$ for $1<i<d$ (Lemma~\ref{lemma:can-schny}),
the topmost points of the
 corresponding vertical segments have $y$-coordinate $k$. As $v_{k_1}$ and  $v_{k_d}$ are the parents of $v_k$ in
 $\mathcal{S}_1$ and $\mathcal{S}_2$, the
 $x$-coordinates of $b_{k_1}$ and $b_{k_d}$ define the $x$-coordinates
 of the two endpoints of $h_k$. Let
%us assume that
these coordinates be
 $x_l$ and $x_r$; then $h_k$ is placed between the two points $(x_l, k)$, $(x_r, k)$ and $b_k$ is placed between
 the two points $(x_m, k)$, $(x_m, \Phi_3(k))$ with $x_l+1<x_m<x_r-1$. Finally for $k=n$, we place $T_n$ so
 that $h_n$ touches $b_1$ to the left and $b_2$ to the right and the topmost point of $b_n$ has $y$-coordinate $n+1$.

We note here that this representation can be computed in linear time so that all coordinates are integers by pre-computing a topological order
$\pi$ of ${\cal S}_1^{-1}\cup {\cal S}_2$; then  $h_k$ is the
segment $[\pi(\Phi_1(k)),\pi(\Phi_2(k))]\times k$ and $b_k$ is the
segment $\pi(k)\times [k,\Phi_3(k)]$.

\subsection{$\lambda$-Fattening of $T_i$'s}

Let $\Gamma'$ be the contact representation of $G$ using \textbf{T}'s obtained above. In this
 phase of the algorithm, we ``fatten'' \textbf{T}'s
 so that each vertex is represented by a $T$-shaped polygon. We
 replace each horizontal segment $h_i$ by an axis-aligned rectangle
 $H_i$ which has the same width as $h_i$, and whose top
 (bottom) side is $\lambda /2$ above (below) $h_i$, for some $0<\lambda$,
 as illustrated in Figure~\ref{fig:subdivision}(a).
 Similarly, we replace each vertical segment $b_i$ by an
 axis-aligned rectangle $B_i$ which has the same height as $b_i$ and whose left (right) side is $\lambda /2$ to the left (right) of $b_i$.
 We call this process \textit{$\lambda$-fattening} of $T_i$. Note that
 this process creates intersections of $H_i$ with $B_i$,
 $B_{\Phi_1(i)}$ and $B_{\Phi_2(i)}$ and intersection of $B_i$ with $H_{\Phi_3(i)}$. We remove these intersections
 by replacing $H_i$ by $H_i-B_{\Phi_1(i)}-B_{\Phi_2(i)}$ and replacing $B_i$ by $B_i-H_i- H_{\Phi_3(i)}$. The
 resulting layout is a contact representation $\Gamma ''$ of $G$ where each vertex $v_i$ of $G$ is
 represented by the $T$-shaped polygon $H_i\cup B_i$.

\begin{figure}[htbp]
\centering
\includegraphics[width=0.7\textwidth]{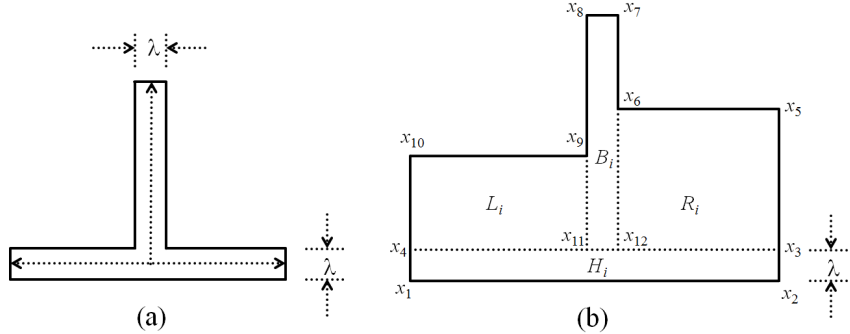}
\caption{\small\sf (a) $\lambda$-fattening of $T$, and (b) subdividing a $T$-shaped polygon into four rectangles.}
\label{fig:subdivision}
\end{figure}

\subsection{Removing unused area}

In this step, we begin with the $\lambda$-fat $T$-shaped polygonal
layout, $\Gamma ''$, from above and
assign each
(rectangular) hole to a polygon adjacent to it. We start by placing an axis-aligned
 rectangle of minimum size that encloses $\Gamma ''$. This creates five new bounded holes.
Note that all the holes in $\Gamma ''$ are rectangles, and
each of them is bounded at the bottom by $H_i$ for some vertex $v_i$.
We assign each hole to this vertex.  This assigns at most two holes
to each vertex $v_i$: one hole $L_i$ to the left of $B_i$, and one
hole $R_i$ to the right of $B_i$.
Now for each
vertex $v_i$, define $P_i=T_i\cup L_i\cup R_i$. It is easy to see
 that $P_i$ is an 8-sided rectilinear polygon since the left side of $L_i$ has the same $x$-coordinate as the left side
 of $H_i$ and the right side of $R_i$ has the same $x$-coordinate as
 the right side of $H_i$. Thus we have a rectilinear dual, $\Gamma$,
 of $G$ where each vertex $v_i$ is represented by $P_i$.

We preferred the above description for the computation of $\Gamma$ since it gives the reader some intuition for
 the construction. However, we note here that the coordinates of $P_i$ could be computed directly, without going
 through T-shapes and $\lambda$-fattening, using the values $\Phi_k(v_i)$ for $k=1,2,3$ and a topological order
 $\pi$ of ${\cal S}_1^{-1}\cup {\cal S}_2$. To this end, we take a topological ordering of the acyclic
 graph $\mathcal{S}_1^{-1}\cup\mathcal{S}_2$ and for each vertex $v_i\neq v_n$, we denote the index of $v_i$ in
 this topological ordering by $\pi(i)$. We can fix the placement of the left and right side of $B_i$ at
 $x$-coordinate $\pi(i)-\lambda/2$ and $\pi(i)+\lambda/2$, respectively. Then for each vertex $v_i$,
 the 8-sided rectilinear polygon $P_i$ representing $v_i$ is defined as follows.

\begin{itemize}
	\item The horizontal \textit{base segment} of $P_i$ has $y$-coordinate $i-\lambda/2$ and extends from the left
		to the right segment (defined below.)
	\item The horizontal \textit{top segment} of $P_i$ has $y$-coordinate $\Phi_3(i)$ (in case of $P_n$, the top segment
		has $y$-coordinate $n+\lambda/2$).
	\item The vertical \textit{left segment} of $P_i$ has $x$-coordinate $\pi(\Phi_1(i)) + \lambda/2$ and goes upward
		from the base segment (in case of $P_1$ and $P_n$, the left segment has $x$-coordinate $1-\lambda/2$).
	\item The vertical \textit{right segment} of $P_i$ has $x$-coordinate $\pi(\Phi_2(i)) - \lambda/2$ and goes upward
		from the base segment (in case of $P_2$ and $P_n$, the right segment has $x$-coordinate $n-1+\lambda/2$).
	\item If $v_i$ has no children in $\mathcal{S}_2$, then the left segment extends upward until the top segment. Otherwise
		let $j$ be the vertex in it that comes clockwise after $\Phi_1(i)$ in the order of neighbors around $i$. (One can
		see that $v_j$ is the child of $v_i$ in $\mathcal{S}_2$ with the lowest canonical number.) In this case, the left
		segment extends upward until $x$-coordinate $j-\lambda/2$, from which point the polygon continues rightward towards
		the left reflex corner at $\pi(i) - \lambda/2$, and then upward until it meets the top segment.
	\item If $v_i$ has no children in $\mathcal{S}_1$, then the right segment extends upward until the top segment. Otherwise
		let $j$ be the vertex in it that comes counter-clockwise after $\Phi_2(i)$ in the order of neighbors around $i$.
		(One can see that $v_j$ is the child of $v_i$ in $\mathcal{S}_1$ with the lowest canonical number.) In this case,
		the right segment extends upward until $x$-coordinate $j-\lambda/2$, from which point the polygon continues leftward
		towards the right reflex corner at $\pi(i) + \lambda/2$, and then upward until it meets the top segment.
\end{itemize}

Then the union of these $n$ polygons define the rectilinear dual $\Gamma$ of $G$ which is contained inside the rectangle
 $[1-\lambda/2, n-1+\lambda/2]\times[1-\lambda/2, n+\lambda/2]$. Thus we can compute the representation in linear time,
 and by scaling the representation by a constat factor, we can make all coordinates to be integers of size $O(n)$.

\subsection{Area-Universality}
\label{sec:universal}

A rectilinear dual $\Gamma$ is \textit{area-universal} if any assignment of areas to its polygons can be realized by a
 combinatorially equivalent layout. Eppstein \textit{et al.}~\cite{EMVS} studied
this concept
 for the case when all the polygons are rectangles and the outer-face
 boundary is also a rectangle (which they call a {\em rectangular layout}). They gave a characterization of area-universal rectangular layouts using the concept of
 ``maximal line-segment''.  A \textit{line-segment} of a layout is the union of inner edges of the layout forming a
 consecutive part of a straight-line. A line-segment that is not contained in any other line-segment is maximal. A
 maximal line-segment $s$ is called {\em one-sided} if it is part of the side of at least one rectangular face, or
 in other words, if the perpendicular line segments that attach to its interior are all on one side of $s$.

\begin{lemma}~\cite{EMVS}
\label{lem:area-uni} A rectangular layout
is area-universal if and only if each maximal segment in the layout is one-sided.
\end{lemma}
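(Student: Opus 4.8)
The plan is to recast area-universality as the surjectivity of a single smooth ``area map'' and to read off both directions from the sign of its Jacobian, which one-sidedness controls. First I would fix the combinatorial layout together with its bounding rectangle and parametrize all combinatorially equivalent layouts by the positions of the internal maximal segments: each internal vertical maximal segment contributes one free $x$-coordinate and each internal horizontal maximal segment one free $y$-coordinate. Since no four rectangles meet in a point, every internal junction is a $T$-junction, and a short Euler-type count shows that the number of internal maximal segments is exactly $n-1$. This matches the dimension of the space of admissible area vectors $\Delta = \{a \in \mathbb{R}^n_{>0} : \sum_i a_i = A\}$, where $A$ is the fixed area of the bounding box. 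Let $\Omega$ be the open set of parameter vectors that keep all widths and heights positive and preserve the combinatorial structure, and let $F : \Omega \to \Delta$ send segment positions to the vector of rectangle areas. Area-universality is then precisely the statement that $F$ is onto $\Delta$.

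Next I would compute the Jacobian of $F$. Moving a vertical maximal segment $s$ rightward by $dx$ widens exactly the rectangles whose right side lies on $s$ and narrows exactly those whose left side lies on $s$, changing the area of such a rectangle $R$ by $\pm h_R\,dx$, with the sign fixed by which side of $s$ the rectangle occupies (and symmetrically for horizontal segments). This is where one-sidedness enters: if every maximal segment is one-sided, then each segment can be charged to the single side on which all of its attachments lie, and this charging orders the segments so that, after a suitable permutation, $DF$ becomes (block-)triangular with nonzero diagonal entries. Hence $\det DF$ is nonzero and of constant sign on all of $\Omega$, i.e. $F$ is everywhere a local diffeomorphism.

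For the backward direction ($\Leftarrow$) I would promote this local nonsingularity to a global bijection. The map $F$ is proper as a map $\Omega \to \Delta$: as a parameter vector approaches $\partial\Omega$, some width or height degenerates to $0$, forcing the corresponding area to $0$, so $F$ approaches $\partial\Delta$. Since $\Delta$ is an open, simply-connected cell and $F$ is a proper local diffeomorphism with $\det DF$ of constant sign, the Hadamard global inverse function theorem yields that $F$ is a diffeomorphism of $\Omega$ onto all of $\Delta$. Thus every admissible area vector is realized by a unique combinatorially equivalent layout, which is exactly area-universality.

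For the forward direction ($\Rightarrow$) I would argue the contrapositive: if some maximal segment $s$ (say vertical) is two-sided, it carries attachments on both sides at distinct interior heights, producing locally a windmill/pinwheel configuration around $s$. Here the balancing of the rectangles to the left of $s$ and the balancing of those to its right over-determine the position of $s$, and I would make this explicit by exhibiting areas of the rectangles hugging $s$ whose left and right constraints are mutually incompatible, so those areas lie outside $\mathrm{im}\,F$; equivalently one checks that two-sidedness forces $\det DF$ to vanish at a suitable configuration, so $F$ cannot cover an open neighborhood. I expect the main obstacle to be the backward direction: verifying that one-sidedness genuinely triangularizes the Jacobian with a constant-sign determinant (the combinatorial-to-analytic link), and then establishing properness carefully enough that the global inverse function theorem applies and $F$ sweeps out all of $\Delta$ rather than merely an open subset.
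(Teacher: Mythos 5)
First, a point of reference: the paper does not prove this lemma at all --- it is imported from Eppstein \emph{et al.}~\cite{EMVS} --- so your attempt has to be measured against that proof. Its structure is: (i) the area map is injective on each combinatorial class (two equivalent layouts with equal areas are identical), so by invariance of domain its image is \emph{open} in the simplex, for \emph{every} layout; (ii) one-sidedness is used only to show the image is also \emph{closed}: a sequence of equivalent layouts whose areas stay bounded away from zero cannot degenerate combinatorially in the limit; connectedness of the simplex then gives surjectivity, and the converse is proved by exhibiting explicit weights that a layout with a two-sided segment cannot realize. Your skeleton (local homeomorphism plus properness implies surjectivity) is the same in spirit, but both supporting arguments have genuine gaps, and you place the use of one-sidedness in the wrong step. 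Concretely, your triangularity claim is false. Take the pinwheel layout: bounding box $[0,3]^2$, rectangles $N=[0,2]\times[2,3]$, $E=[2,3]\times[1,3]$, $S=[1,3]\times[0,1]$, $W=[0,1]\times[0,2]$, $C=[1,2]\times[1,2]$. All four internal maximal segments are one-sided (each is a full side of one of $N,E,S,W$), yet the dependency structure of the area map is a single cycle: $a_N$ depends on segments $s_1,s_2$; $a_E$ on $s_2,s_3$; $a_S$ on $s_3,s_4$; $a_W$ on $s_4,s_1$; and $a_C$ on all four. Whichever $n-1$ of the $n$ area coordinates you keep, every row of $DF$ has at least two nonzero entries, so no row/column permutation makes it triangular, and the support is irreducible, so there is no nontrivial block-triangular structure either. (The determinant is in fact nonzero here --- it equals a product of full side lengths minus a cyclic product of strictly shorter lengths --- but establishing that requires a cancellation argument, not triangularity; \cite{EMVS} avoids this step entirely via the injectivity theorem plus invariance of domain.)

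The second and more structural gap is properness. It is simply not true that approaching $\partial\Omega$ forces some area to zero: a sequence of combinatorially equivalent layouts can converge to a configuration in which two $T$-junctions attached to \emph{opposite} sides of a maximal segment slide together and form a crossing; all $n$ areas stay bounded away from $0$, yet the limit leaves $\Omega$. This degeneration is possible exactly when some segment is two-sided, and forbidding it is precisely the job of one-sidedness (attachments from opposite sides of a segment cannot collide if one side has no attachments). As written, your properness argument never invokes one-sidedness, so it applies verbatim to every layout; combined with openness (which genuinely does hold for every layout) it would prove that \emph{all} rectangular layouts are area-universal, contradicting the forward direction of the very lemma you are proving. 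So the argument proves too much and cannot be complete. Finally, in the forward direction, the fallback ``two-sidedness forces $\det DF$ to vanish at a suitable configuration'' is a non sequitur: a singular Jacobian at one point does not obstruct surjectivity of $F$. The workable version of that direction is the one you gesture at but do not carry out --- constructing, as in \cite{EMVS}, an explicit weight vector (e.g.\ making the rectangles hugging the two-sided segment very small and the others large) and proving no equivalent layout realizes it.
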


No such characterization is known when some faces are not rectangles.
Still we can use the characterization from
Lemma~\ref{lem:area-uni} to show that the rectilinear dual obtained by the algorithm from
the previous section is area-universal, with the following Lemma.

\begin{lemma}
\label{lem:eight-suff} Let $\Gamma$ be the rectilinear dual
obtained by the above algorithm. Then $\Gamma$ is area-universal.
\end{lemma}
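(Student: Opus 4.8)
The plan is to reduce the area-universality of the non-rectangular dual $\Gamma$ to the rectangular case covered by Lemma~\ref{lem:area-uni}. The key idea is that each $8$-sided polygon $P_i$ can be subdivided into rectangles in a canonical way, as suggested already by Figure~\ref{fig:subdivision}(b), yielding an auxiliary \emph{rectangular} layout $\Gamma_R$ that refines $\Gamma$. I would first make this subdivision precise: partition each $P_i$ into a small constant number of rectangles (for instance, the base rectangle together with the rectangles obtained by cutting horizontally at the reflex corners' $y$-coordinates $j-\lambda/2$ introduced in the last two bullets of the construction). The point of passing to $\Gamma_R$ is that any area assignment to the polygons $P_i$ can be distributed among the sub-rectangles of each $P_i$; if $\Gamma_R$ is area-universal, then in particular it realizes the areas in which each group of sub-rectangles sums to the prescribed weight $w(v_i)$, and collapsing the subdivision back recovers an area assignment for $\Gamma$. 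So it suffices to prove that $\Gamma_R$ is area-universal and that the collapsing is combinatorially consistent (the internal subdivision edges do not merge distinct polygons or destroy the contacts that encode $E$).

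Granting that reduction, the core of the argument is to verify the one-sidedness criterion of Lemma~\ref{lem:area-uni} for every maximal line-segment of $\Gamma_R$. Here I would exploit the very rigid structure coming from the Schnyder/canonical-order construction. The horizontal maximal segments are the ones carrying the base segment of each $P_i$ at height $i-\lambda/2$ and the reflex-corner heights; I expect each of these to be one-sided because, by construction, the polygons sitting \emph{above} such a segment and those sitting \emph{below} it are determined by the tree structure of $\mathcal{S}_3$ (a horizontal base of $P_i$ bounds only $P_i$ from above and the faces it rests on from below). The vertical maximal segments correspond to the left/right sides of the vertical bars $B_i$, placed at the integer $x$-coordinates $\pi(i)\pm\lambda/2$; one-sidedness here should follow from the fact that $B_i$ is a single tall rectangle whose two vertical sides each bound $P_i$ on one side only, with the $\mathcal{S}_1$- and $\mathcal{S}_2$-children attaching on the opposite side. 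The disciplined thing to do is to enumerate the constantly-many types of maximal segments arising in the subdivision and check each type once.

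The step I expect to be the main obstacle is precisely establishing one-sidedness for the horizontal segments near the reflex corners, where a polygon $P_i$ is fattened by absorbing the holes $L_i$ and $R_i$. At those heights two sub-rectangles of $P_i$ meet sub-rectangles belonging to its $\mathcal{S}_2$- and $\mathcal{S}_1$-children, and I must confirm that the resulting maximal horizontal segment does not have perpendicular attachments on \emph{both} sides. This is where the choice of reflex height as $j-\lambda/2$ (the lowest-numbered child in the respective Schnyder tree) matters: it should guarantee that the segment is flush with the side of the child rectangle, making it one-sided. The remaining work is mostly bookkeeping, arguing that collapsing the subdivision preserves complexity~$8$ and all adjacencies, and that the whole layout (including the five outer holes created by the enclosing rectangle) satisfies the criterion as well.
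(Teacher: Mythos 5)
Your overall strategy is the paper's: refine $\Gamma$ into a rectangular layout, verify the one-sidedness criterion of Lemma~\ref{lem:area-uni}, distribute each weight $w(v_i)$ arbitrarily over the sub-rectangles of $P_i$, and collapse back. That reduction logic is sound and matches the paper. However, the concrete subdivision you propose breaks the argument at exactly the step you yourself flag as ``the main obstacle.'' Cutting each $P_i$ \emph{horizontally} at the reflex heights $j-\lambda/2$ produces a stack of slabs, and then the left side of the central column of $P_i$ becomes a maximal vertical segment that is \emph{two-sided} whenever the two reflex corners of $P_i$ lie at different heights (the generic case). Concretely, suppose the left reflex of $P_i$ is at height $j_2-\lambda/2$ and the right reflex is higher, at $j_1-\lambda/2$. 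The maximal vertical segment at $x=\pi(i)-\lambda/2$ runs from the left reflex corner up to $H_{\Phi_3(i)}$; from the outside (left), the horizontal edges of the $\mathcal{S}_2$-children of $v_i$ attach to its interior, while from the inside (right), your internal cut at height $j_1-\lambda/2$ has its left endpoint on the interior of this same segment. By Lemma~\ref{lem:area-uni}, the refined layout is therefore \emph{not} area-universal, and the reduction yields nothing. Note also that your stated reason for one-sidedness of the vertical segments (``$B_i$ is a single tall rectangle'') contradicts your own decomposition: under horizontal-slab cuts, $B_i$ is not a single face of the layout.

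The fix is the paper's choice of cuts, which is the opposite of yours in direction: subdivide each $P_i$ by \emph{one horizontal} segment along the top of the base bar $H_i$ and \emph{two vertical} segments along the sides of the column $B_i$, producing the four rectangles $H_i$, $B_i$, $L_i$, $R_i$. With this decomposition every maximal horizontal segment of the refined layout is the full top or bottom side of some rectangle $H_i$ (extending it maximally, it terminates at $B_{\Phi_1(i)}$ and $B_{\Phi_2(i)}$), and every maximal vertical segment is the full left or right side of some rectangle $B_i$ (terminating at $H_i$ below and $H_{\Phi_3(i)}$ above). Being a full side of a rectangular face, each maximal segment is one-sided by definition, so Lemma~\ref{lem:area-uni} applies with no case analysis at the reflex corners at all; the weight-splitting and collapsing step then goes through as you describe. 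So the obstacle you identified is genuine, and it is removed not by further bookkeeping but by choosing the cut directions so that the interior cuts of $P_i$ never land on the sides of its own column.
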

\begin{proof} To show the area-universality of $\Gamma$, we divide all the
polygons in $\Gamma$ into a set of rectangles such that the resulting
rectangular layout is area-universal. Specifically, we divide each polygon $P_i$
into four rectangles $H_i$ $B_i$, $L_i$ and $R_i$ (as defined in the previous subsection)
by adding three auxiliary segments: one horizontal and two vertical, as illustrated in
Figure~\ref{fig:subdivision}(b).
Any horizontal segment $s$ not on the bounding box belongs to some $H_i$
(either top or bottom), and expanding it to its maximum it ends at
$B_{\Phi_1(v_i)}$ on the left and $B_{\Phi_2(i)}$ on the right.  So $s$
is one-sided since it is a side of the rectangle $H_i$.
Any vertical segment $s$ not on the bounding box belongs to some $B_i$
(either left or right), and expanding it to its maximum it ends at
$H_i$ on the bottom and $H_{\Phi_3(i)}$ on the top.  So $s$
is one-sided since it is a side of the rectangle $B_i$.

Now given any assignment of areas $w:V\rightarrow \mathbb{R}^+$ to the vertices $V$ of $G$, we split $w(v_i)$ arbitrarily
 into four parts and assign the four values to its four associated rectangles. Since $\Gamma ^*$ is area-universal,
 there exists a rectilinear dual of $G$ that is combinatorially equivalent to $\Gamma$ for which these areas are realized.
Figure~\ref{fig:8-side-illus}(f) illustrates the rectangular
layout obtained from the rectilinear dual in Figure~\ref{fig:8-side-illus}(e).
\end{proof}

So for any area-assignment, the rectilinear dual that we found can
be turned into a combinatorially equivalent one that respects the
area requirements.  This proves our main result for maximal planar
graphs. Omitting $v_1$ and $v_2$ from the drawing
still results in a cartogram where the union of all polygons is a
rectangle, so the result also holds for all planar graphs that are
inner triangulated and have a simple outer-face.

Recall that the lower bound on the complexity
of polygons in any rectilinear dual (and hence in any cartogram) is 8,
as proven by Yeap and Sarrafzadeh~\cite{ys93}. The algorithm described
in this section, along with this lower bound leads to our main theorem.

\begin{theorem}
\label{th:eight-nece-suf} Eight-sided polygons are always sufficient and sometimes necessary for
 a cartogram of an inner triangulated planar graph with a simple outer-face.
\end{theorem}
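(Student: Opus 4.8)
The plan is to obtain the theorem by assembling two ingredients that are already in hand: the explicit construction of this section for the sufficiency direction, and the known unweighted lower bound for the necessity direction.

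For sufficiency, I would first run the three-phase algorithm on the input graph. Concretely: compute a canonical order together with its Schnyder trees $\mathcal{S}_1,\mathcal{S}_2,\mathcal{S}_3$, build the contact representation with \textbf{T}'s, perform $\lambda$-fattening, and then absorb each rectangular hole into the polygon that bounds it from below. The resulting dual $\Gamma$ represents each vertex by a polygon $P_i=T_i\cup L_i\cup R_i$, and the alignment argument already recorded---that the outer sides of $L_i$ and $R_i$ share $x$-coordinates with the sides of $H_i$---shows every $P_i$ has exactly eight sides. The decisive step is then to pass from this unweighted dual to an arbitrary cartogram, for which I would invoke Lemma~\ref{lem:eight-suff}: since $\Gamma$ is area-universal, every weight function $w\colon V\to\mathbb{R}^+$ is realized by a combinatorially equivalent layout, and combinatorial equivalence preserves the eight-corner structure of each polygon. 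Thus eight sides always suffice.

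Two reductions remain in order to reach the full generality of the statement. If the input is inner triangulated but not maximal, I would maximalize it by adding the two auxiliary vertices $v_1,v_2$, construct the cartogram, and then delete $v_1$ and $v_2$ from the drawing; because these two polygons occupy the extreme left and right of the bounding rectangle, deleting them keeps the union of the remaining polygons rectangular, giving a valid cartogram of the original graph without raising any polygon's complexity above eight. Verifying exactly this---that the outer boundary stays a rectangle and that no polygon gains sides after the deletion---is the one place where I expect to need genuine care, even though the heavy lifting was already done in the construction and in Lemma~\ref{lem:eight-suff}.

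For necessity, I would simply reduce to the unweighted case rather than build anything new. Yeap and Sarrafzadeh~\cite{ys93} exhibit an inner triangulated planar graph admitting no rectilinear dual with fewer than eight sides per polygon. A cartogram is a rectilinear dual that additionally matches prescribed areas, so any complexity lower bound for rectilinear duals is inherited verbatim by cartograms; equipping their graph with, say, uniform weights yields a cartogram that still forces an eight-sided polygon. Combined with the sufficiency argument, this establishes that eight-sided polygons are always sufficient and sometimes necessary.
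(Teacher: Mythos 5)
Your proposal is correct and follows essentially the same route as the paper: sufficiency via the three-phase construction together with the area-universality of Lemma~\ref{lem:eight-suff} (handling non-maximal inner-triangulated inputs by adding and later omitting $v_1$ and $v_2$), and necessity by inheriting the Yeap--Sarrafzadeh~\cite{ys93} lower bound for unweighted rectilinear duals, which applies verbatim to cartograms. The only difference is presentational: you flag the deletion of $v_1,v_2$ as a step needing verification, whereas the paper asserts it directly, and you make explicit the (correct) observation that any weight function, e.g.\ uniform weights, turns the unweighted lower bound example into a cartogram lower bound.
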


\subsection{Feature Size and Supporting Line Set}

In addition to optimal polygonal complexity, we point out here that the 8-sided area-universal
 rectilinear layout constructed with our algorithm maximizes the feature size and reduces the
 number of supporting lines. Earlier constructions, e.g.,~\cite{deBerg07,BR-WADS11},
often rely on ``thin connectors'' to maintain adjacencies, whereas our
construction does not. Specifically, let $G$
be a maximal planar graph with a prescribed weight function $w:V(G)\rightarrow
\mathbb{R}^+$. Choose $W$ and $H$ such that $W\times H=A=\sum_{v\in V(G)}w(v)$. We are
interested in cartograms within a rectangle of width $W$ and the height $H$.
Define $w_{min}=min_{v\in V(G)}w(v)$.

Recall that each vertex $v_i$ is represented by the union of at most
four rectangles $H_i\cup B_i\cup R_i \cup L_i$, with $H_i$ and $B_i$
non-empty.  We can distribute the weight of $v_i$ arbitrarily among them.
In particular, we can
assign zero areas to the rectangles $L_i$ and $R_i$ and split its weight into
two equal parts to $H_i$ and $B_i$.
In this layout each original vertex is represented by rectangles
$H_i$ and $B_i$ whose union is either a rectangle or some fattened $T$ or $L$,
and all the necessary
contacts remain. Hence we can use this simplified layout to produce the
cartogram.

The distribution of the weight of $v_i$ in equal parts to $H_i$ and
$B_i$ allows to bound the feature size.  The height and width of each
rectangle are bounded by $H$ and $W$ respectively.  Its weight is at
least $w_{min}/2$. Therefore, the height and width of each rectangle is at
least $\frac{w_{min}}{2 \max\{W, H\}}$. Thus the minimum feature size of
the cartogram is at least $\frac{w_{min}}{2 \max\{W, H\}}$. This is
worst-case optimal, as the polygon with the smallest weight might need
to reach from left to right and top to bottom in the
representation. We may choose $W=H=\sqrt{A}$. Then the minimum feature
size is $\frac{w_{min}}{2\sqrt{A}}$. Furthermore the rectangular
layout based on the rectangles $H_i$ and $B_i$ alone yields a
cartogram with at most $2n$ supporting lines, instead of the $3n$
supporting lines in the cartogram based on four rectangles per vertex.

\subsection{Computing the Cartogram}

The proof of Lemma~\ref{lem:area-uni} implies an algorithm for computing the final cartogram. Splitting the $T$-shaped polygons
 into four rectangles and distributing the weights on these rectangles yields an
area-universal rectangular dual.  This combinatorial structure has to be
turned into an actual cartogram, i.e., into a layout respecting the given
weights.
Wimer \textit{et al.}~\cite{WKC} gave a formulation of the problem
which combines flows and quadratic equations.  Eppstein \textit{et al.}~\cite{EMVS} indicated that a solution can be
 found with a numerical iteration.  Alternate methods also exist, based on non-linear programming~\cite{rosenberg},
 geometric programming~\cite{MCH96}, and convex programming~\cite{ChenFan98}. Heuristic hill-climbing schemes converge
 much quicker and can be used in practice, at the expense of small errors~\cite{Ceder92,ITK98,WC95}.

\subsection{Implementation and Experimental Results}

We implemented the entire algorithm, along with a force-directed heuristic to compute the final cartogram.
We treat each region as a rectilinear ``room'' containing an amount of ``air'' equal to the weight assigned to the corresponding vertex. We then simulate the natural phenomenon
 of air pressure applied to the ``walls'', which correspond to the line segment borders in our layout.

\begin{figure}[htbp]
\centering
\includegraphics[width=0.45\textwidth]{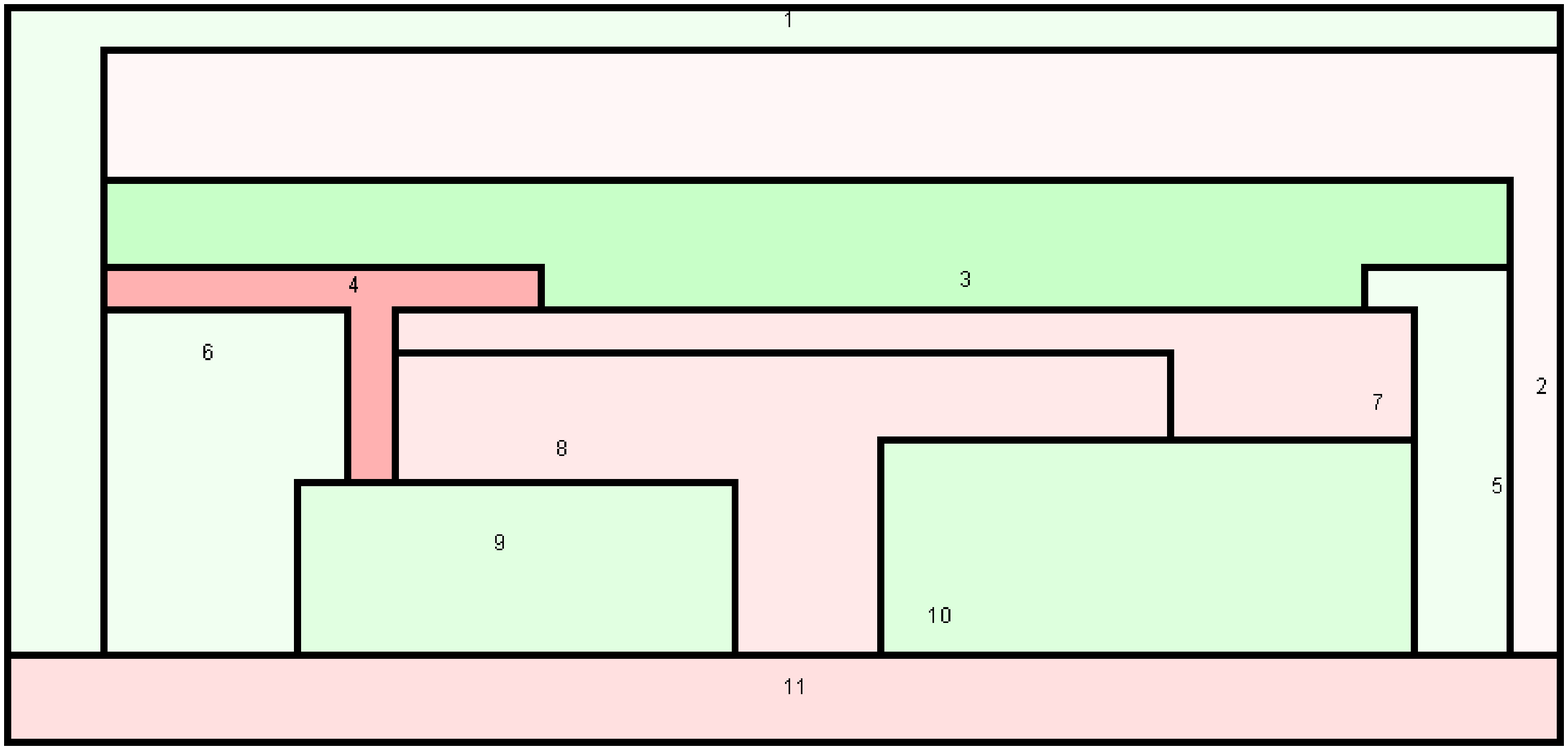}\hspace{.4cm}
\includegraphics[width=0.45\textwidth]{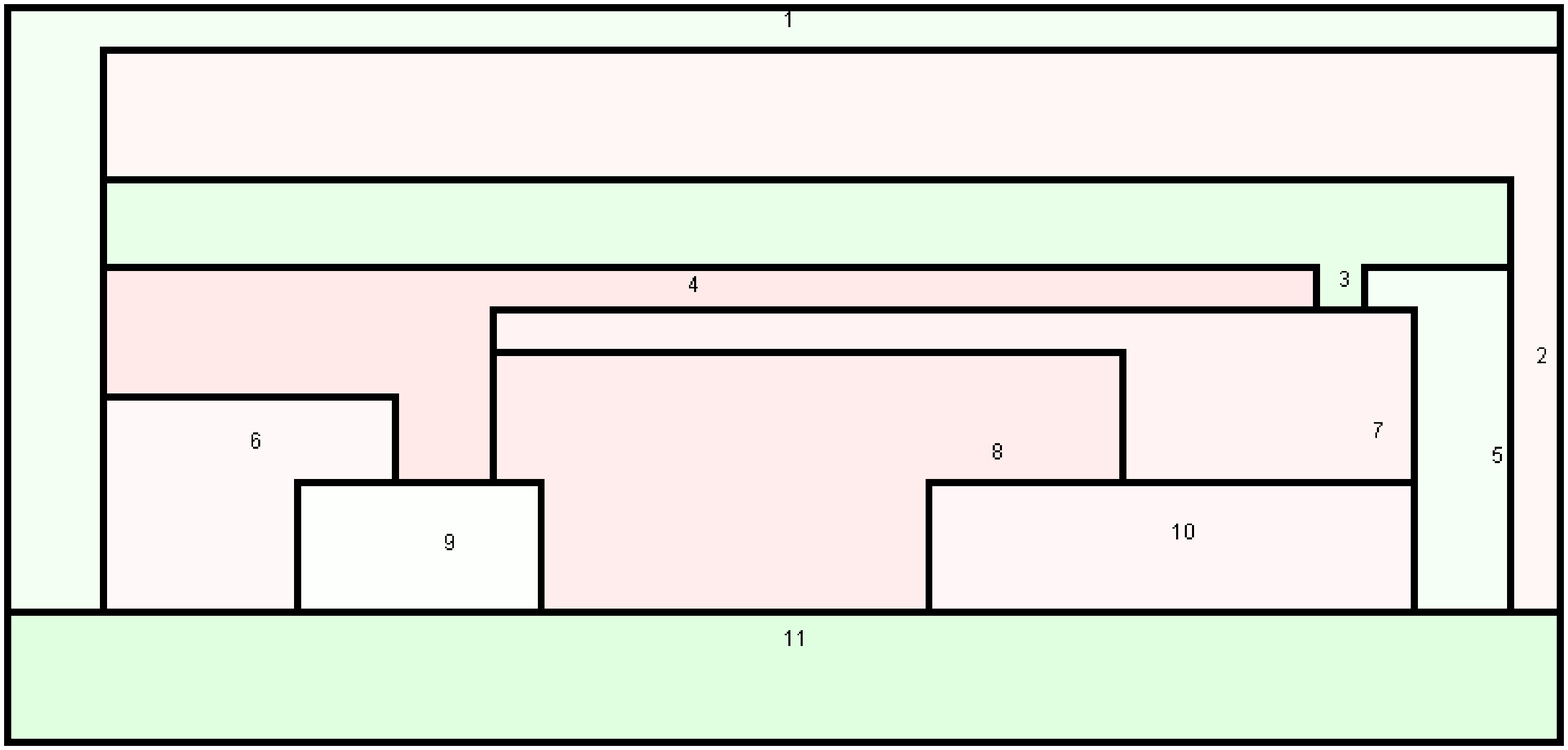}\\
\vspace{.5cm}
\includegraphics[width=0.45\textwidth]{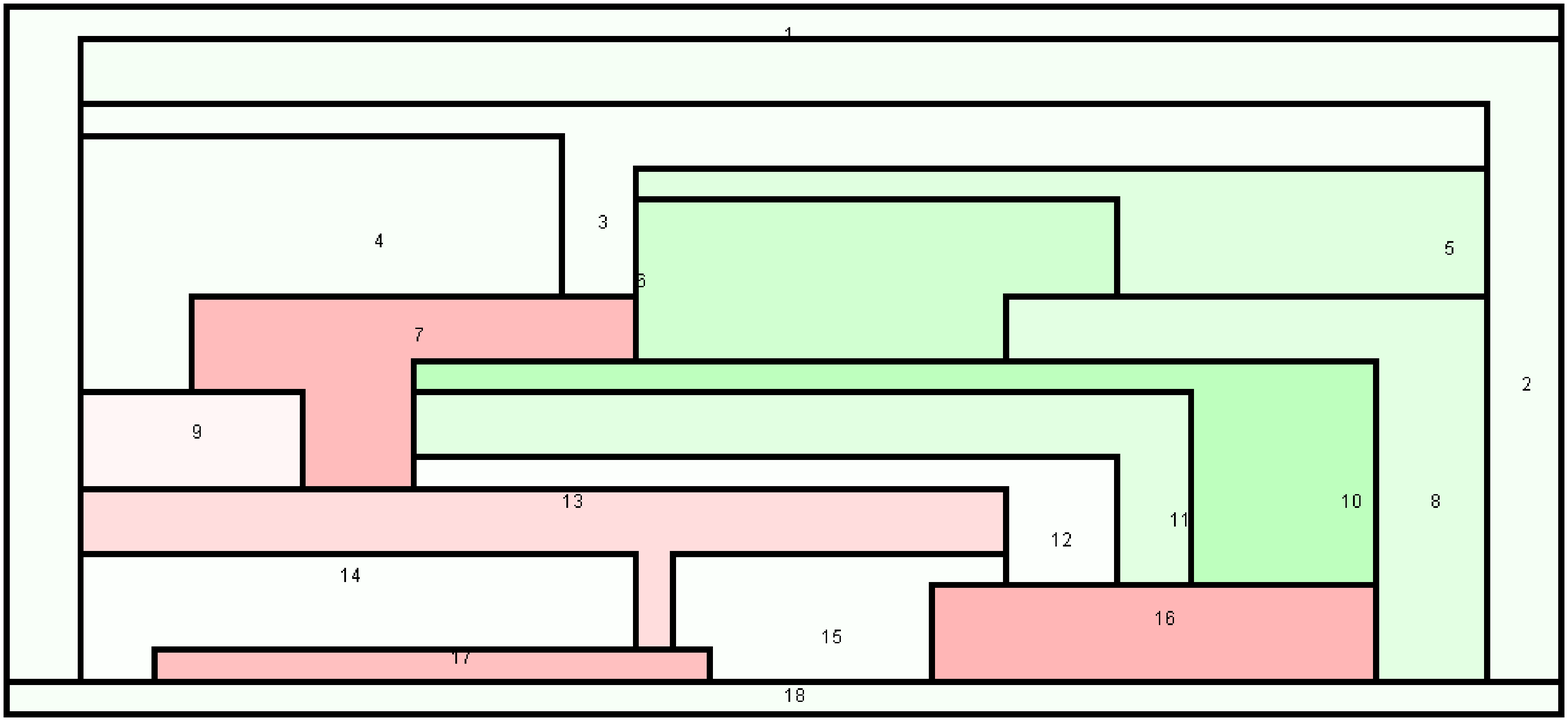}\hspace{.4cm}
\includegraphics[width=0.45\textwidth]{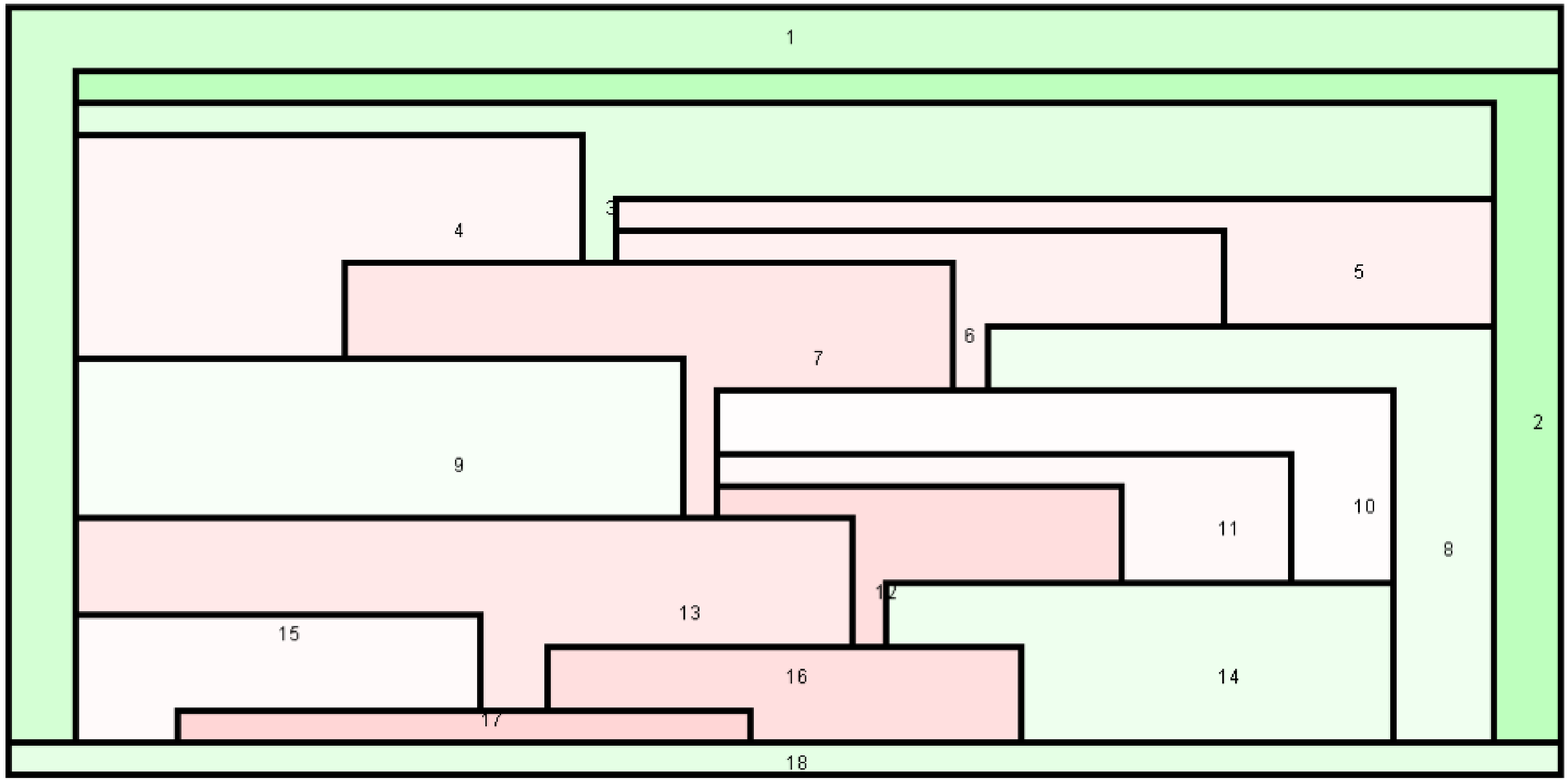}
\caption{\small\sf Input-output pairs: weights are assigned at random in the range $[10,100]$, and the cartographic error in the output file is less than 1\%. The colors indicate air-pressure: the greener a region is, the more it needs to shrink; the redder a regions is, the more it needs to grow.}
\label{fig:examples}
\end{figure}

For each vertex $v_i$ of $G$, the polygon $P_i$ contains air with volume $w(v_i)$. If the area
 of $P_i$ is $A_i$, then the pressure applied to each of the walls surrounding $P_i$ is given by
 $\mathcal{P}(v_i)=\frac{w(v_i)}{A_i}$. In Section~\ref{sec:universal}, we saw that the maximal segments of the layout
 are the two horizontal and the two vertical segment associated with each polygon. For each polygon,
 the horizontal segment other than the base is entirely inside the polygon, hence it feels no
 ``pressure'' on it. For each of the other three segments $s$ for the polygon $P_i$, the ``inward force''
 it feels is given by $\mathcal{F}(s)=\sum_{v_j\in V(s)}[\mathcal{P}(v_j)l_j]-\mathcal{P}(v_i)l_i$.
 Here $V(s)$ is the set of vertices other that $v_i$ whose corresponding polygon touches the segment $s$
 and $l_i$ (resp. $l_j$) denotes the length of $s$ that is shared with $P_i$ (resp. $P_j$).
At each iteration,
 we consider the segment that feels the maximum pressure and let it move in the appropriate direction. The convergence of this scheme
 follows from~\cite{ITK98}. Some sample input-output pairs are shown in Fig.~\ref{fig:examples}; more examples and movies showing the gradual transformation can be found at \url{www.cs.arizona.edu/~mjalam/optocart}.

We ran a few simple experiments to test the heuristic for time and accuracy. In the first experiment we generated 5 graphs on $n$ vertices with $n$ in the range $[10-50]$ and assigned 5 random weight distributions with weights in the range $[10-100]$. Next we ran the heuristic until the cartographic error dropped below 1\% and recorded the average time. All the averages were below 50 milliseconds, which confirms that good solutions can be found very quickly in practice; see Fig~\ref{fig:exp}(a).
In the second experiment we fixed the time allowed and tested the quality of the cartograms obtained within the time limit. Specifically, we generated 5 graphs on $n$ vertices with $n$ in the range $[10-50]$
and assigned 5 random weight distributions with weights in the range $[10-100]$. We allowed the program to run for 1 millisecond and recorded the average ``cartographic error''. Even with such a small time limit, the average cartographic error was under 2.5\%; see Fig.~\ref{fig:exp}(b). Here, the \textit{cartographic error} for a cartogram of a planar graph $G=(V, E)$ is defined as in~\cite{ks07}: $max_{v\in V}(|A(v)-w(v)|/w(v))$, where $w(v)$ denotes the weight assigned to $v$ and $A(v)$ denotes the area of the polygon representing $v$. All of the experiments were run on an Intel Core i3 machine with a 2.2GHz processor and 4GB RAM.

\begin{figure}[htbp]
\centering
\includegraphics[width=0.49\textwidth]{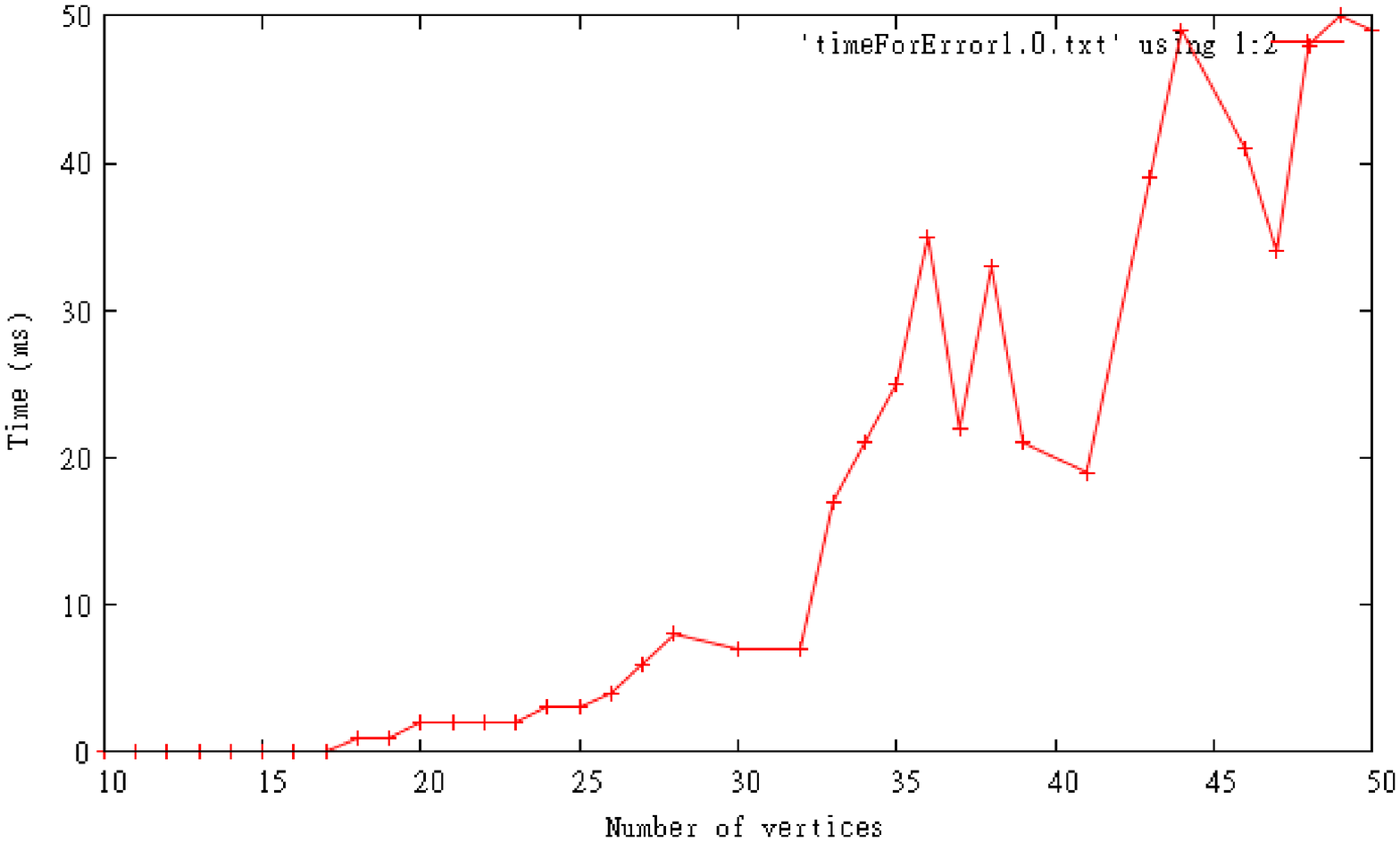}\hspace{.0cm}
\includegraphics[width=0.49\textwidth]{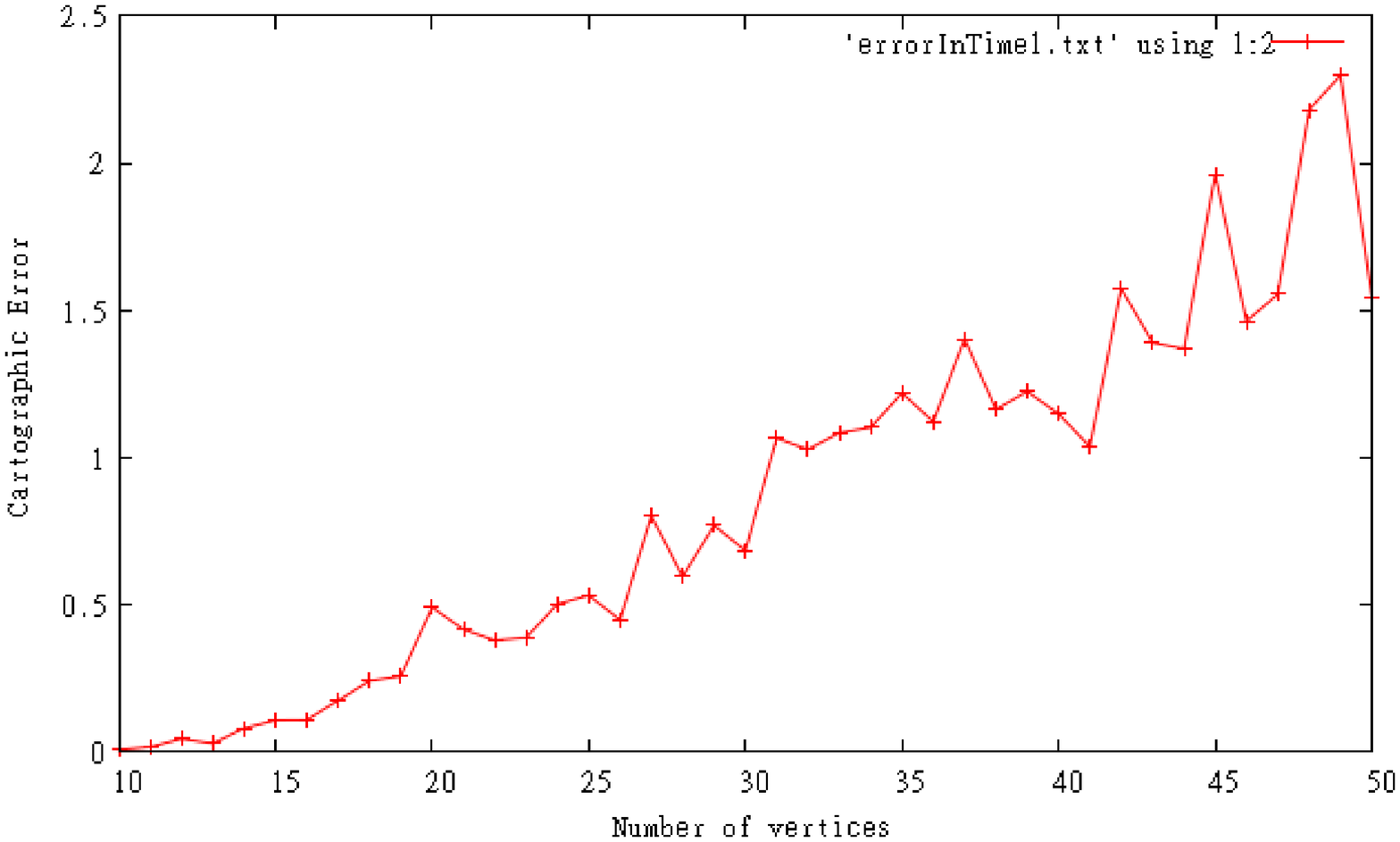}\\
 (a) \hspace{7cm} (b)\\
\caption{\small\sf Experimental results for graphs with 10-50 vertices. Each sample point corresponds to 25 graphs. (a) Plotting the average time it takes to reach cartographic error of 1\%. (b) Plotting the average cartographic error achieved in 1millisecond. }
\label{fig:exp}
\end{figure}

\section{Cartograms for Hamiltonian Graphs}
\label{sec:ham}

In this section we show that 8-sided polygons are always sufficient
and sometimes necessary for a cartogram of a {\em
 Hamiltonian} maximal planar graph.
 We first give a direct linear-time construction with 8-sided regions
 without relying on numerical iteration or heuristics, as discussed in the previous section.
 We then prove that this is optimal by showing that 8 sides are necessary, with a non-trivial lower bound example.

\subsection{Sufficiency of 8-sided Polygons}

Let $v_1,\dots,v_n$ be a Hamiltonian cycle of a maximal planar graph
  $G$. Consider a plane embedding of $G$ with the edge $(v_1,v_n)$ on the triangular outer-face. The Hamiltonian cycle splits the plane
 graph $G$ into two outer-planar graphs which we call the {\em left
   graph $G_l$} and {\em right graph $G_r$}. Edges on the Hamiltonian cycle belong to both graphs.
 The naming is with respect to a planar drawing $\Gamma$ of $G$ in which the
vertices
 $v_1,\dots,v_n$ are placed in increasing order along a vertical
 line, and the edges are drawn with $y$-monotone curves with leftmost
 edge $(v_1,v_n)$; see Figure~\ref{fig:c-shape-illus}(a).

\begin{lemma}\label{lem:ham-eight}
 Let $G=(V,E)$ be a Hamiltonian maximal planar graph and let $w:V\rightarrow \mathbb{R}^+$ be a weight
 function. Then a cartogram with 8-sided polygons can be computed in linear time.
\end{lemma}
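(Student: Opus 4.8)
The plan is to build the cartogram directly from the two-page book structure that Hamiltonicity hands us. Since $G$ is maximal planar, the left and right graphs $G_l$ and $G_r$ are maximal outer-planar graphs (triangulated polygons) that share the cycle, and their non-cycle edges form two non-crossing, hence laminar, families of chords --- the two pages of a book embedding with spine $v_1,\dots,v_n$. First I would fix this data together with the drawing $\Gamma$ of Figure~\ref{fig:c-shape-illus}(a) and record the weak-dual trees of $G_l$ and $G_r$; these trees order the chords by nesting and will drive both the placement of the polygons and, later, the area computation.

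For the geometry I would place the vertices as horizontal bars stacked in Hamiltonian order, so that the bars of $v_i$ and $v_{i+1}$ share a horizontal boundary; this realizes all path edges of the cycle by vertical adjacency, while the extreme edge $(v_1,v_n)$ is closed off on the left exactly as in the drawing. A single straight side of a bar can already touch several polygons stacked beside it, so most chords are realized for free; a chord needs a genuine \emph{notch} (turning a rectangle into a C-shape) only when a polygon must reach around a polygon nested below it in a page. Using the laminar order of each page I would charge such notches so that each vertex carves out at most one rectangular bite, which is precisely a C-shape, i.e. an $8$-sided rectilinear polygon; counting corners then gives the complexity bound.

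The heart of the lemma --- and the reason we can avoid the numerical iteration of the general construction --- is that this layout admits a direct exact-area computation. The nesting of the two pages makes the dependency among polygons hierarchical: the area of each polygon is governed by a single coordinate that can be fixed once the polygons nested inside it have been placed. I would therefore fix the bounding box to have area $\sum_{v\in V} w(v)$ (so the cartogram has no holes) and then process the two dual trees bottom-up, at each step choosing the one free coordinate so that the region receives exactly the total weight it must contain. Balancing the constraints level by level makes every leaf region attain area exactly $w(v_i)$, and the whole sweep runs in linear time.

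The main obstacle is reconciling the two requirements at once. Realizing every chord of a page while charging at most one bite to each vertex forces a careful, order-sensitive assignment of chords to endpoints, and I must show that this same assignment yields a hierarchy whose area equations are triangular, so that the exact areas can be solved in a single pass rather than iteratively. In other words, the delicate point is that the combinatorial choice that keeps the complexity at $8$ is exactly the one that decouples the area constraints; proving that these two goals are simultaneously achievable --- and that neither the arms nor the bounding box reintroduce holes --- is where the real work lies.
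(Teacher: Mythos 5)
Your geometric skeleton matches the paper's construction: stack the vertices as horizontal bars in Hamiltonian order inside a rectangle $R$ of width $W$ and height $H$ with $W\times H=A=\sum_{v}w(v)$, split the chords into the left page $G_l$ and the right page $G_r$, and realize chords by arms that hang down past intermediate bars, yielding 8-sided polygons. But there is a genuine gap exactly where you concede ``the real work lies,'' and the paper closes it with a single quantitative idea your proposal lacks: fix the width of every arm \emph{in advance} to $\lambda_i=w(v_i)/(2H+W)$. With arm widths pre-fixed, when $v_i$ is placed its two arm lengths are already determined (the arms occupy strips reserved when its lowest-indexed left/right neighbors were placed), so the area equation for $v_i$ has exactly one unknown --- the body height --- and is solved on the spot. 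The construction is a single greedy sweep in spine order; no bottom-up pass over weak-dual trees is needed, and indeed the decoupling you attribute to laminarity actually comes from this choice of $\lambda_i$. The same formula yields the two feasibility facts you never establish: each arm has area at most $H\lambda_i$, so the body height is at least $(w(v_i)-2H\lambda_i)/W=\lambda_i>0$; and the total width of all reserved strips is at most $2(A-w(v_i))/(2H+W)\le W-2\lambda_i$, so the strips and a body of positive width always fit inside $R$. Without some such choice, your ``triangular system'' claim is unsupported: the arm widths couple each vertex's area constraint to the horizontal packing of \emph{all} other vertices' strips, and the nesting structure alone does not break that coupling.

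Two smaller problems. First, your charging claim --- ``each vertex carves out at most one rectangular bite, which is precisely a C-shape'' --- is imprecise: a vertex with earlier neighbors on both pages (what the paper later calls two-legged) needs arms on \emph{both} sides; the polygon is still 8-sided because the arms hang from the two ends of the body, but it is not a rectangle minus one rectangular notch when the arms reach different depths. Second, realizing nested chords requires contacts between \emph{arms} of different vertices, not only arm-to-body contacts: for a chord $(v_i,v_k)$ whose strip for $v_k$ was reserved before $v_i$ was placed, the paper argues via planarity that the strips of $v_i$ and $v_k$ must be adjacent (no vertex $v_j$ with $i<j<k$ can have a strip between them), so the two arms touch. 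Your proposal asserts such adjacencies ``for free'' from laminarity but never verifies this ordering condition, which is what turns every chord --- not just the innermost ones --- into an actual contact.
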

\begin{proof}
Let $v_1, \ldots, v_n$ be a Hamiltonian cycle and $\Gamma$ be the
drawing defined above with $(v_1,v_n)$ on the outer-face.
Suppose $R$ is a rectangle of width $W$ and height $H$
where $W\times H=A=\sum_{v\in V}w(v)$.
Each vertex $v_i$ will be represented as the union of three
 rectangles, the \textit{left leg}, the \textit{body} $B_i$, and
\textit{right leg} of $v_i$. We set the width of the legs to $\lambda _i=w(v_i)/(2H+W)$;
see Figure~\ref{fig:c-shape-illus}(b).

Our algorithms places vertices $v_1,\dots,v_n$ in this order, and also
reserves vertical strips for legs of all vertices that have earlier neighbors.
More precisely, let $\calL_j$ be all vertices $v_k$ with an edge $(v_i,v_k)$
in $G_l$ for which $i\leq j<k$.  Similarly define $\calR_j$ with respect
to edges in $G_r$.  In the drawing $\Gamma$, $\calL_j$ are those vertices above
$v_j$ for which the horizontal ray left from $v_j$ crosses an incident edge.

We place vertices $v_1,\dots,v_j$ with the following invariant: The
horizontal line through the top of $B_j$ intersects, from left to right:
(a) a vertical strip of width $\lambda_k$ for
each $v_k\in \calL_j$, in descending order, (b) a non-empty part of the
top of $B_j$, and (c) a vertical strip of width $\lambda_k$ for
each $v_k\in \calR_j$, in ascending order.

We start by placing $B_1$ as a rectangle that spans the bottom of $R$.
At the left and right end of the top of $B_1$, we reserve vertical
strips of width $\lambda_k$ for each vertex in $\calL_1$ and $\calR_1$, respectively.

To place $B_i$, $i>1$, first locate the vertical strips reserved for
$v_i$ in previous steps (since $v_i\in \calL_{i-1}$ and $i\in \calR_{i-1}$,
there always are such strips, though they may have started only
at the top of $B_{i-1}$).  Since vertical strips are in
descending/ascending order, the strips for $v_i$ are the innermost ones.
Let $B_i$ be a rectangle just above $B_{i-1}$ connecting these strips.
Choose the height of $B_i$ so large that it, together with the left
and right leg inside the strips, has area $w(v_i)$; we will discuss
soon why this height is positive.

Finally, at the top left of the
polygon of $v_j$ we reserve a new vertical
strip of width $\lambda_k$ for each vertex $k$ that is in $\calL_i-\calL_{i-1}$.
Similarly reserve strips for vertices in $\calR_i-\calR_{i-1}$.   Using
planarity, it is easy to see that vertices in $\calL_i-\calL_{i-1}$ must have smaller
indices than vertices in $\calL_{i-1}$, and so this can be done such that the
order required for the invariant is respected.

\begin{figure}[htbp]
%\vspace{-0.4cm}
\centering
\includegraphics[width=0.98\textwidth]{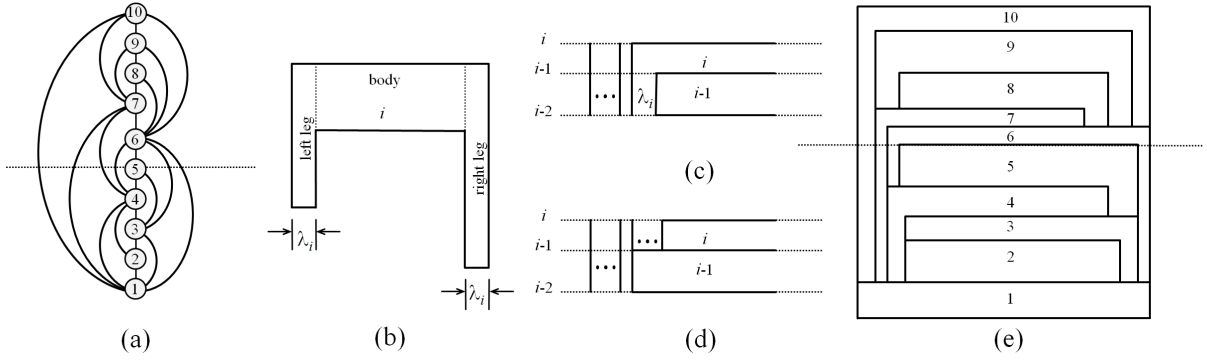}
\caption{\small\sf (a) A Hamiltonian maximal planar graph $G$, (b) an 8-sided polygon for vertex $i$, (c)--(d) illustration
 for the algorithm to construct a cartogram of $G$, (e) a cartogram of $G$ with 8-sided polygons.}
\label{fig:c-shape-illus}
\end{figure}

Clearly this algorithm takes linear time and constructs 8-gons of the
correct area.  To see that it creates contacts for all edges, consider
an edge $(v_i,v_k)$ with $i<k$ in $G_l$ (edges in $G_r$ are similar.)
By definition $k\in \calL_i$.  If $v_k\in \calL_i-\calL_{i-1}$,
then we reserved a vertical strip for $v_k$ when placing $v_i$.  This
vertical strip is used for the left leg of $v_k$, which hence touches $v_i$.
Otherwise ($v_k\not\in \calL_i-\calL_{i-1}$)
we have $v_k\in \calL_{i-1}$. At the time that $v_{i-1}$ was placed, there
hence existed a vertical strip for $v_k$.
There also was a vertical strip for $v_i\in \calL_{i-1}$.
These two strips must be adjacent, because
by planarity (and edge $(v_i,v_k)$) there can
be no vertex $v_j$ with $i<j<k$ in $\calL_{i-1}$.
So these strips create a contact between the two left legs of $v_i$ and $v_k$.

We now discuss the choice of $\lambda_i=w(v_i)/(2H+W)$.
Each leg of $v_i$ has height $\leq H$ and width $\lambda_i$, hence area
$\leq H\lambda_i$.  Then the body $B_i$ has area $\geq w(v_i)-2H\lambda_i$
and width $\leq W$, hence height $\geq (w(v_i)-2H\lambda_i)/W=\lambda_i$.
It follows that $B_i$ has positive height.  Also all vertical
strips fit:  after placing vertex $v_i$, we have a strip of width $\lambda_k$
for each vertex $v_k$ in $\calL_i$ and $\calR_i$, and  these
strips use width
$$ \sum_{v_k\in \calL_i} \frac{w(v_k)}{2H{+}W} + \sum_{v_k\in \calR_i} \frac{w(v_k)}{2H{+}W}
\leq \frac{2\sum_{v_k\in V-\{v_i\}} w(v_k)} {2H{+}W}
\leq \frac{2(A-w(v_i))} {2H{+}W}
\leq \frac{2A}{2H} - \frac{2w(v_i)}{2H{+}W} = W-2\lambda_i.$$
Hence $B_i$ has width $\geq 2\lambda_i$ and the polygon
of $v_i$ has minimum feature size $\lambda_i$.
\end{proof}

This algorithm also guarantees a minimum feature
 size for the cartogram: $\min_{v_i\in V}\lambda_i=\frac{w_{min}}{2H{+}W}$, where $w_{min}=\min_{v\in V}w(v_i)$. Choosing $W=\sqrt{2A}$ and $H=\sqrt{A/2}$, yields minimum feature size $\frac{w_{min}}{2\sqrt{2}\sqrt{A}}$.

\subsection{Necessity of 8-sided Polygons}
While it was known that 8-sided rectilinear polygons are necessary for
general planar graphs~\cite{Rin87}, here we show that 8-sided
rectilinear polygons are necessary even for Hamiltonian maximal planar
graphs.

\begin{figure}[htbp]
\hspace*{\fill}
\scalebox{1.2}{\input{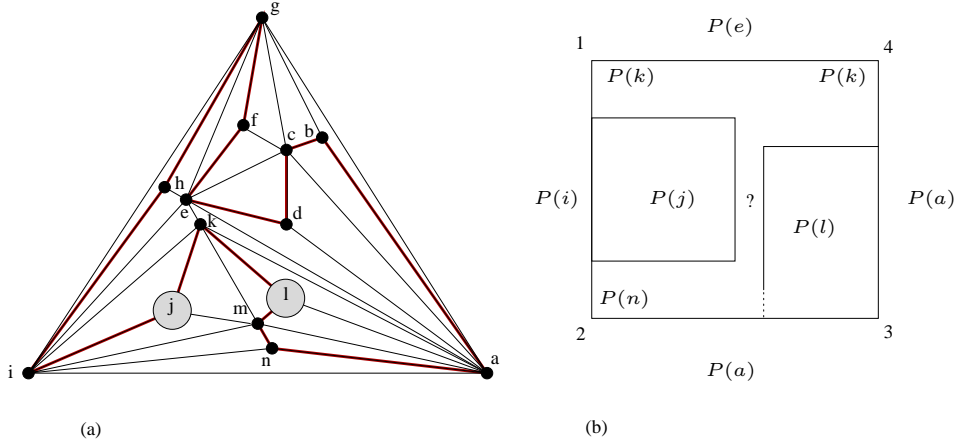}}
\hspace*{\fill}\
 \caption{\small\sf (a) A maximal planar Hamiltonian graph with a weight function that requires at least one 8-sided polygon
  in any cartogram. (b) Illustration for the proof of Lemma~\ref{lem:ham-counter}.}
 \label{fig:ham-counter}
\end{figure}

\begin{lemma}
\label{lem:ham-counter} Consider the Hamiltonian maximal planar graph $G = (V,E)$ in Figure \ref{fig:ham-counter}(a).
 Define $w(j) = w(l) = D$ and $w(v) = \delta$ for $v \in V \setminus\{j,l\}$, where $D\gg\delta$. Then any cartogram
 of $G$ with weight function $w$ requires at least one 8-sided polygon.
\end{lemma}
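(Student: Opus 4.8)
The plan is to argue by contradiction, using the elementary fact that a simple rectilinear polygon with $s$ sides has $s$ even, $s\ge 4$, and exactly $(s-4)/2$ reflex ($270^\circ$) corners; thus $s\le 6$ is equivalent to the polygon having at most one reflex corner. I would therefore assume that $G$ admits a cartogram $\Gamma$ for the weight function $w$ in which \emph{every} polygon has at most one reflex corner, and aim to reach a contradiction. Throughout, all areas are taken inside the bounding rectangle $R$, whose area equals the total weight $2D+(|V|-2)\delta$.

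First I would exploit the weight gap to fix the coarse geometry. Since $P(j)$ and $P(l)$ each have area $D$ while every other polygon has area $\delta$, letting $D/\delta\to\infty$ makes the union of all light polygons have total area $(|V|-2)\delta$, an arbitrarily small fraction of the area of $R$. Consequently $P(j)$ and $P(l)$ must essentially partition $R$ into two regions of equal area $D$, and every light polygon is forced into a thin neighborhood of the common interface between $P(j)$, $P(l)$, and $\partial R$. Combined with the cyclic orders prescribed by the plane embedding in Figure~\ref{fig:ham-counter}(a), this pins the gross placement of $P(j)$ and $P(l)$ down to a small number of cases, up to the symmetries of the construction; in each of them the two heavy polygons occupy essentially opposite regions of $R$.

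The combinatorial heart of the argument then isolates the single light vertex whose polygon is drawn wrapping around in Figure~\ref{fig:ham-counter}(b). I would trace the contacts that this polygon is required to realize: its neighborhood includes both heavy vertices, which by the previous step lie on opposite sides of the thin interface, while the remaining light neighbors must be met in the fixed cyclic order given by the embedding. Following these forced contacts, the polygon must reach toward $P(j)$ on one side and toward $P(l)$ on the other while still separating the intervening light polygons, and establishing the final required adjacency — the region marked ``?'' in Figure~\ref{fig:ham-counter}(b) — cannot be done without creating a \emph{second} reflex corner. This contradicts the at-most-one-reflex-corner assumption, so some polygon needs at least $8$ sides.

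I expect the main obstacle to be the forced-layout step: one must verify that, up to the symmetries of the example, the heavy polygons and the thin channels admit essentially one combinatorial arrangement, so that the ``?'' obstruction is genuinely unavoidable rather than an artifact of a single drawing. Carefully enumerating the symmetric cases, and confirming that no rerouting of the negligible-area light polygons escapes the double bend, is where the real work lies; the weight-gap reduction in the second step is precisely what keeps this case analysis finite and tractable.
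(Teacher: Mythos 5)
Your proposal has the right end-game but leaves the actual core of the proof undone, and the step you yourself defer (``the forced-layout step \ldots{} is where the real work lies'') is precisely where the approach, as described, breaks down. The weight gap alone does not pin down the layout: a polygon of area $\delta$ may still be arbitrarily long and thin, so the light polygons are not confined to any ``thin neighborhood'' in a combinatorially useful sense, and the space of ways to route them is not finite. Hence your hoped-for enumeration of ``a small number of cases, up to symmetry'' has no finiteness guarantee, and the claim that the gross placement of $P(j)$ and $P(l)$ is thereby pinned down is unsupported. The idea you are missing --- and what makes the paper's case analysis finite --- is a purely combinatorial counting argument about separating triangles, applied \emph{before} the weights enter at all: in any rectilinear dual, the region inside a separating triangle $\{u,v,x\}$ must contain a reflex corner of $P(u)$, $P(v)$ or $P(x)$. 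The graph of Figure~\ref{fig:ham-counter}(a) is built so that $\{a,c,e,g,i\}$ spans five interior-disjoint separating triangles; with at most one reflex corner per ($\le 6$-sided) polygon, pigeonhole forces each of these five regions to contain exactly one such reflex corner, so the region $R_{aei}$ is a rectangle. A second round on the separating triangles $\{a,k,m\}$, $\{k,i,m\}$, $\{i,a,m\}$ (whose interiors are the single vertices $l$, $j$, $n$) then forces $P(j)$, $P(l)$, $P(n)$ to be rectangles, places the reflex corner of $P(a)$ inside $R_{aei}$, and determines which polygons occupy which corners of that rectangle.

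Only after all of this does the paper invoke $D\gg\delta$, and only locally: $P(j)$ and $P(l)$ are by then known to be \emph{rectangles} each filling nearly half of $R_{aei}$, so their $x$- or $y$-ranges overlap; the polygon lying between them must be $P(m)$ (it cannot be $P(k)$, which would acquire two reflex corners, nor $P(n)$, which is not adjacent to $j$ or $l$); and $P(m)$ must additionally separate $P(n)$ from both $P(j)$ and $P(l)$, which forces a second reflex corner on $P(m)$ --- the contradiction. So the correct order of reasoning is the opposite of yours: weight-independent combinatorics first, rigidly determining which polygons are rectangles and where reflex corners may sit; the geometric consequence of the weight gap last. To salvage your outline, replace your ``coarse geometry'' step with the separating-triangle reflex-corner count; without it, the ``?'' obstruction in Figure~\ref{fig:ham-counter}(b) cannot be shown to be unavoidable.
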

\begin{proof}
Assume for a contradiction that $G$ admits a cartogram $\Gamma$ with respect to $w$ such that the polygons $\{P(v)\}$
used in $\Gamma$ have complexity at most 6.
Observe that if $\{u,v,x\}$ is some
 separating triangle in $G$, i.e.,
three mutually adjacent vertices whose removal disconnect the graph,
then the region $R_{uvw}$ used for the inside of the separating triangle contains
at least one reflex corner of the polygon $P(u)$, $P(v)$, or $P(x)$.
 The $5$-vertex set $\{a,c,e,g,i\}$ in $G$ is the union of the five separating triangles
 $\{a,c,g\}$, $\{a,c,e\}$, $\{c,e,g\}$, $\{a,e,i\}$, and $\{e,g,i\}$ with disjoint interiors. Since all the
 polygons in $\Gamma$ are either $4$-sided or $6$-sided,
 the union of
 the polygons for these five vertices has
 at most five reflex corners and hence each of the five separating triangles above contains the only reflex
 corner of the polygon for $a$, $c$, $e$, $g$, or $i$.
In particular, the outer boundary of $R_{aei}$
 contains exactly one reflex corner from one of $P(a)$, $P(e)$ and $P(i)$, and hence it is a rectangle, say
 $1234$.
By symmetry, we may assume that the reflex corner of $P(i)$ is {\em not}
used for $R_{aei}$.

The $4$-vertex set $\{a,i,k,m\}$ is the disjoint union of
three separating triangles $\{a,k,m\}$, $\{k,i,m\}$, $\{i,a,m\}$
whose interiors are vertices $l,j$ and $n$, respectively.
Since the reflex corner of $P(i)$ is not used for $R_{aei}$, it
also cannot be used for any of these separating triangles. Hence each
of $P(j)$, $P(l)$ and $P(n)$ contains exactly one reflex
corner from $P(a)$, $P(k)$ and $P(m)$.  In particular,
rectangle $1234$ (which is $R_{aei}$) must contain the reflex corner
of $P(a)$.   We also can conclude that
$P(j)$, $P(l)$ and $P(n)$ are all rectangles, since there are no
additional reflex corners available to accommodate additional
convex corners from $P(j)$, $P(l)$ and $P(n)$.

Assume the naming in Figure \ref{fig:ham-counter}(b) is such
that edge $12$ belongs to $P(i)$, edges $23$ and $34$ belong to $P(a)$
and edge $41$ belongs to $P(e)$.   By the adjacencies, $P(k)$ must occupy
corners 1 and 4 and $P(n)$ must occupy corner 2, while corner 3 (which
is the reflex corner of $P(a)$) could belong to $n$ or $l$.

Now consider the rectangles $P(j)$ and $P(l)$.  If $D$ is sufficiently big,
then these two rectangles each occupy almost half of rectangle $1234$.
Therefore, either their $x$-range or their $y$-range must overlap in their
interior.
Assume their $y$-range overlaps, the other case is similar.  Which polygon
should occupy the area that is between $P(j)$ and $P(l)$ horizontally?
It cannot be $k$, because $P(k)$ contains corners 1 and 2 and hence would
obtain 2 reflex angles from $P(j)$ and $P(l)$.  So it must be $P(m)$, since
$n$ is not adjacent to $j$ and $l$.
But $P(m)$ must
also separate $P(n)$ from both $P(j)$ and $P(l)$.  Regardless of whether
$n$ or $l$ occupies corner 3, this is not possible without two reflex
vertices for $m$.
Therefore either the areas are not respected or some polygon must have
8 sides.
\end{proof}

Lemma~\ref{lem:ham-eight} together with Lemma~\ref{lem:ham-counter} yield the
following theorem.

\begin{theorem}
 Eight-sided polygons are always sufficient and sometimes necessary for a cartogram of a Hamiltonian maximal planar graph.
\end{theorem}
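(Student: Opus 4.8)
The plan is to prove the theorem by splitting it into its two constituent assertions—sufficiency and necessity—and discharging each by directly invoking the two lemmas established just above, since between them they already cover exactly the two directions the theorem requires.

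For the sufficiency direction I would appeal to Lemma~\ref{lem:ham-eight}. That lemma asserts that for \emph{any} Hamiltonian maximal planar graph $G$ and \emph{any} weight function $w:V\rightarrow\mathbb{R}^+$, a cartogram of $G$ realizing $w$ exists in which every polygon has at most $8$ sides (and is in fact constructible in linear time). Since the hypotheses ``Hamiltonian maximal planar graph'' and ``arbitrary positive weights'' are precisely those of the theorem, this immediately yields that $8$-sided polygons always suffice; no additional argument is needed here, as the entire content is the leg-and-body layout of that lemma.

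For the necessity direction I would invoke Lemma~\ref{lem:ham-counter}. That lemma exhibits a single concrete Hamiltonian maximal planar graph $G$—the one in Figure~\ref{fig:ham-counter}(a)—together with an explicit weight function (placing a large weight $D$ on the two special vertices $j$ and $l$ and a small weight $\delta\ll D$ on all remaining vertices) for which \emph{every} cartogram must use at least one $8$-sided polygon. The existence of this one instance is, by definition, enough to certify that $8$ sides are sometimes necessary. Combining the two directions gives the theorem in a single line.

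The only thing to verify is purely bookkeeping: that the graph class and weight regime in the necessity lemma lie within those covered by the sufficiency lemma, so that the upper and lower bounds are genuinely tight for the \emph{same} problem. At the level of this theorem there is therefore no real obstacle—the substance lives entirely in the two lemmas. If I were proving the statement from scratch, the genuine difficulty would be absorbed into the necessity lemma, whose reflex-corner counting (each separating triangle consuming one reflex corner, pushed through the nested triangles $\{a,c,g\},\{a,c,e\},\{c,e,g\},\{a,e,i\},\{e,g,i\}$ and then $\{a,k,m\},\{k,i,m\},\{i,a,m\}$) is by far the more delicate of the two arguments; the sufficiency side is a direct geometric construction.
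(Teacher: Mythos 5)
Your proposal is correct and matches the paper exactly: the paper derives the theorem in one line by combining Lemma~\ref{lem:ham-eight} (sufficiency via the linear-time leg-and-body construction) with Lemma~\ref{lem:ham-counter} (necessity via the weighted counterexample). Your additional remarks about where the real difficulty lies are accurate but not needed for the theorem itself.
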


\section{Cartograms with 6-sided Polygons}
\label{sec:six-side}

Here we study cartograms with
rectilinear 6-sided polygons. We first note that these are
easily constructed for outer-planar graphs.  Then we generalize this
technique to other maximal planar Hamiltonian graphs.

\subsection{Maximal Outer-planar Graphs}

Our algorithm from Lemma~\ref{lem:ham-eight} naturally gives drawings
of maximal outer-planar graphs that use 6-sided polygons.
Another linear-time algorithm for constructing a cartogram of a maximal
outer-planar graph with 6-sided rectilinear polygons is also described
in~\cite{ABFGKK11}, however, our construction
based on Lemma~\ref{lem:ham-eight} is much simpler.
Any
maximal outer-planar graph $G$ can be made into a maximal Hamiltonian
graph by duplicating $G$ and gluing the copies together at the outer-face
such that $G_l=G=G_r$.  (This graph has double edges, but the algorithm
in Lemma~\ref{lem:ham-eight} can handle double edges as long as one
copy is in the left and one in the right graph.)  Create the drawing
based on Lemma~\ref{lem:ham-eight} with all vertices having double the weight,
and cut it in half with a vertical line.  This gives a drawing of $G$
with 6-sided rectilinear polygons as desired.

\subsection{One-Legged Hamiltonian Cycles}

We now aim to find more maximal Hamiltonian graphs which have cartograms with 6-sided polygons.
In a Hamiltonian cycle $v_1, \ldots, v_n$, call vertex
$v_j$ {\em two-legged} if it has a neighbor $v_i^l$ in $G_l$
with $i^l<j-1$ and also a neighbor $v_i^r$ in $G_r$ with $i_r<j-1$.
Call a Hamiltonian cycle {\em one-legged} if none of its vertices
is two-legged.
In the construction from Lemma~\ref{lem:ham-eight}, the polygon of $v_j$
obtains a reflex vertex on both sides only if it has a neighbor below
$v_{j-1}$ on both sides, or in other words, if it is two-legged.
Hence we have:

\begin{lemma}\label{lem:6-sided}
  Let $G=(V,E)$ be a maximal planar graph with a one-legged Hamiltonian
  cycle and let $w:V\rightarrow R^+$ be a weight function. Then a
  cartogram with $6$-sided polygons can be computed in linear time.
\end{lemma}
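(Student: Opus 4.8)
The plan is to reuse the construction of Lemma~\ref{lem:ham-eight} verbatim and to show that the one-legged hypothesis caps the complexity of every polygon at $6$ sides, while inheriting both the correct areas and the linear running time from that lemma.

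First I would analyze the shape produced for a single vertex $v_j$. Its polygon is the union of a rectangular body $B_j$ with a left leg and a right leg, each leg confined to a reserved vertical strip and running downward from the body. A leg contributes no corner when it does not descend below the top of $B_{j-1}$ (its strip having been reserved only at that level), and it contributes exactly one reflex corner, where it meets the body, when it does descend further. A direct corner count then gives: $4$ sides when neither leg descends, $6$ sides when exactly one leg descends, and $8$ sides when both descend.

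Second --- and this is the step that needs care --- I would make precise \emph{when} a leg descends. The left strip for $v_j$ is first reserved at the step that places the lowest-indexed neighbor of $v_j$ in $G_l$; if that index is $p$, the strip begins at the top of $B_p$, so the left leg of $v_j$ must descend to the level of $B_p$ in order to touch $v_p$, and this descent lies strictly below the top of $B_{j-1}$ exactly when $p<j-1$. The symmetric statement holds on the right with respect to $G_r$. Thus the left (resp.\ right) leg of $v_j$ descends, creating a reflex corner on that side, precisely when $v_j$ has a neighbor in $G_l$ (resp.\ $G_r$) of index smaller than $j-1$; in other words, $v_j$ becomes an $8$-gon exactly when it is \emph{two-legged}, as already observed just before the statement.

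Finally I would invoke the hypothesis: since the Hamiltonian cycle is one-legged, no vertex is two-legged, so for every $j$ at most one leg descends and every polygon has at most $6$ sides; the rectangle for $v_1$ and the boundary vertices are degenerate cases with no descending legs. All areas and adjacencies carry over unchanged from Lemma~\ref{lem:ham-eight}, as does its linear time bound. The main obstacle is the second step: tying the appearance of a reflex corner cleanly to the existence of a neighbor below $v_{j-1}$, which amounts to tracking, for each leg, the step at which its strip was first reserved, and verifying that the boundary vertices (notably $v_1$ and the topmost body) introduce no extra reflex corners.
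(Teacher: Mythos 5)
Your proposal is correct and takes essentially the same route as the paper: the paper deduces Lemma~\ref{lem:6-sided} directly from the construction of Lemma~\ref{lem:ham-eight} via the single observation (stated just before the lemma) that the polygon of $v_j$ acquires a reflex corner on a given side only if $v_j$ has a neighbor below $v_{j-1}$ on that side, i.e., only if $v_j$ is two-legged. Your corner count ($4$, $6$, or $8$ sides according to the number of descending legs) and your tracking of where each reserved strip begins are just a more explicit write-up of that same argument.
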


It is a natural question to characterize graphs that have such
Hamiltonian cycles. Given a Hamiltonian cycle $v_1,\ldots,v_n$ we fix a
plane embedding of $G$ with outer triangle $\{v_1,v_k,v_n\}$.
The following lemma gives charecterization of graphs with such hamiltonian cycles.

\begin{lemma}\label{lem:Hamilton-property}
  Let $v_1,\ldots,v_n$ be a Hamiltonian cycle in a maximal plane graph
  $G$ with $(v_1,v_n)$ on the outer triangle. Define $w_i := v_{n-i+1}$.
  Then the following conditions are equivalent:
  \begin{enumerate}[(a)]
  \item The Hamiltonian cycle is one-legged.\label{enum:one-legged}
  \item For $i=2,\ldots,n$, edge $(v_{i-1},v_i)$ is an outer edge
of the graph $G_i$ induced by
by $v_1,v_2,\ldots,v_i$ (with the induced embedding.)
\label{enum:outer-edge}
  \item $v_{n-1}$ is an outer vertex and vertex $v_i$ has at least
    two neighbors with a larger index for
    $i=1,\ldots,n-2$.\label{enum:two-back}
  \item $w_1,\ldots,w_n$ is a canonical ordering for
    $G$.\label{enum:canonical-ordering}
 \item $G$ admits a Schnyder realizer $(\mathcal{S}_1,\mathcal{S}_2,\mathcal{S}_3)$ in which
   $w_1$, $w_2$ and $w_n$ are the roots of $\mathcal{S}_1$, $\mathcal{S}_2$ and $\mathcal{S}_3$, respectively
   and every inner vertex is a leaf in $\mathcal{S}_1$ or $\mathcal{S}_2$.\label{enum:Schnyder-wood}
  \end{enumerate}
\end{lemma}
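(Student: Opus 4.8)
The plan is to prove the five statements equivalent by the cyclic chain $(a)\Rightarrow(b)\Rightarrow(c)\Rightarrow(d)\Rightarrow(e)\Rightarrow(a)$, working throughout in the drawing $\Gamma$ in which $v_1,\dots,v_n$ sit in increasing order on a vertical line, with the chords of $G_l$ drawn to the left and those of $G_r$ to the right. For $(a)\Rightarrow(b)$ I fix $i$ and look at $G_i$, whose topmost vertex in $\Gamma$ is $v_i$; hence $v_i$ lies on the outer face and its two incident outer edges run to its leftmost and rightmost neighbor in $G_i$. Every neighbor of $v_i$ in $G_i$ lies below it, and in left-to-right order these are: the $G_l$-back-neighbors, then the cycle edge to $v_{i-1}$, then the $G_r$-back-neighbors. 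Thus $(v_{i-1},v_i)$ is an outer edge of $G_i$ precisely when $v_i$ has no $G_l$-neighbor of index $<i-1$ or no $G_r$-neighbor of index $<i-1$, i.e.\ precisely when $v_i$ is not two-legged. Since this reasoning is reversible it in fact gives $(a)\Leftrightarrow(b)$.

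For $(b)\Rightarrow(c)$, taking $i=n$ shows $(v_{n-1},v_n)$ is an outer edge of $G$, so $v_{n-1}$ is an outer vertex. For $i\le n-2$ the edge $(v_i,v_{i+1})$ is \emph{not} one of the three outer-triangle edges of the maximal graph $G$, hence it is interior and bounds two triangles; but by (b) it lies on the outer face of $G_{i+1}$, so the triangle on its outer side has an apex $x\notin\{v_1,\dots,v_{i+1}\}$, i.e.\ a neighbor of $v_i$ of index $>i+1$. Together with $v_{i+1}$ this gives $v_i$ at least two neighbors of larger index, which is exactly (c).

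For $(c)\Rightarrow(d)$ I reverse the order and read $w_1,\dots,w_n=v_n,\dots,v_1$. Then ``$v_i$ has $\ge 2$ larger-index neighbors'' becomes ``$w_j$ has $\ge 2$ neighbors among $w_1,\dots,w_{j-1}$'' for $j\ge 3$, while $v_{n-1}$ being outer together with $v_n,v_1$ supplies the outer triangle $\{w_1,w_2,w_n\}$ with $(w_1,w_2)=(v_n,v_{n-1})$ on the outer face. What remains is to verify the genuine canonical-order axioms: that each prefix graph on $w_1,\dots,w_j$ is biconnected with $(w_1,w_2)$ on its outer cycle, and that the earlier neighbors of $w_{j+1}$ form a \emph{contiguous} subinterval of that cycle. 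I would establish these by induction on $j$ (equivalently, by deleting the vertices of $G$ in increasing $v$-index), using internal triangulation and the fixed embedding, with the apexes produced in the previous step certifying that each deleted vertex sits inside a triangulated fan. This contiguity/biconnectivity step is the main obstacle: condition (c) by itself only delivers the cardinality bound and the root placement, so the interval property has to be wrung out of planarity and maximality.

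Finally, for $(d)\Rightarrow(e)\Rightarrow(a)$ I invoke Lemma~\ref{lemma:can-schny}. Part (a) turns the canonical order $w_1,\dots,w_n$ into a Schnyder realizer whose trees $\mathcal{S}_1,\mathcal{S}_2,\mathcal{S}_3$ are rooted at $w_1,w_2,w_n$, as (e) demands. The outgoing $\mathcal{S}_1$- and $\mathcal{S}_2$-edges of a vertex go to its first and last predecessor in clockwise order, which in $\Gamma$ are its extreme higher-index neighbors on the left ($G_l$) and on the right ($G_r$); consequently a vertex receives an $\mathcal{S}_1$-child exactly when some lower vertex attaches to it from the left and an $\mathcal{S}_2$-child exactly when one attaches from the right. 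Hence ``every inner vertex is a leaf in $\mathcal{S}_1$ or in $\mathcal{S}_2$'' says exactly that no vertex has both a left and a right leg, i.e.\ one-leggedness; this yields $(d)\Rightarrow(e)$, and reading the same dictionary backwards (recovering $w$ as the reverse Hamiltonian order via Lemma~\ref{lemma:can-schny}(b)) closes the cycle with $(e)\Rightarrow(a)$. The delicate bookkeeping here is matching the abstract ``first/last predecessor'' trees to the geometric $G_l/G_r$ split and correctly accounting for the cycle edges; this should be carried out by tracking the rotation system at each vertex in $\Gamma$.
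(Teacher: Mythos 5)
Your opening steps are sound: the geometric argument for $(a)\Leftrightarrow(b)$ and the facial-triangle argument for $(b)\Rightarrow(c)$ are essentially the paper's own (modulo the harmless edge case $i=1$, where $(v_1,v_2)$ may in fact lie on the outer triangle, but $v_1$ trivially has the two larger neighbors $v_2,v_n$). The problems begin at $(c)\Rightarrow(d)$. There you explicitly defer the heart of the implication --- biconnectivity of the prefix graphs and the contiguity of the attachment intervals --- to an unspecified induction, and you misdiagnose what drives it: you say the interval property ``has to be wrung out of planarity and maximality,'' but those two hypotheses alone cannot deliver it. Indeed, in a non-Hamiltonian setting condition $(c)$ does not imply $(d)$: take $K_4$ with outer triangle $a,b,c$ and center $d$, add a vertex $e$ inside the face $abd$ adjacent to $a,b,d$, and order $v_1,\ldots,v_5 = e,d,a,b,c$; then $v_4=b$ is outer and every $v_i$ with $i\le 3$ has two larger-indexed neighbors, yet the reversed order is not canonical ($d$ is not in the exterior face of the triangle $abc$). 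The ingredient you are missing --- and the one the paper's induction actually uses --- is Hamiltonicity: the vertices not yet shelled, $w_j,\ldots,w_n$, form a \emph{path} ending at the outer vertex $w_n$, so if the neighbors of $w_i$ failed to form a subinterval of $C_{i-1}$, some later vertex $w_j$ would be trapped inside a bounded face of $\tilde G_i$ while being joined by that path to the exterior, contradicting planarity. Without invoking the Hamiltonian path your induction cannot close.

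The final leg $(d)\Rightarrow(e)\Rightarrow(a)$ also has real gaps. Your ``dictionary'' --- a vertex receives an $\mathcal{S}_1$-child exactly when some lower vertex attaches to it from the left --- is false as stated, because the Hamiltonian-cycle edge $(v_{j-1},v_j)$ can itself be the outgoing $\mathcal{S}_1$- or $\mathcal{S}_2$-edge of $v_{j-1}$ (this happens precisely when $v_{j-1}$ has no chord of the relevant side going up); then $v_j$ receives an $\mathcal{S}_1$-child without having a left leg, since a leg requires a neighbor of index $<j-1$, which the cycle edge never supplies. Symmetrically, a lower vertex attaching to $v_j$ from the left need not choose $v_j$ as its $\mathcal{S}_1$-parent (only its \emph{extreme} predecessor is chosen). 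So the vertex-by-vertex equivalence you rely on needs a genuine argument --- ruling out, e.g., a vertex with a cycle-edge $\mathcal{S}_1$-child and a chord $\mathcal{S}_2$-child --- which is exactly where the paper works instead with the decomposition of the outer cycle $C_i$ into the $\mathcal{S}_1$- and $\mathcal{S}_2$-paths from $w_i$ (valid here because, by Hamiltonicity, each attachment interval contains the previous top vertex) and the observation that a vertex acquires only $\mathcal{S}_3$-edges at the moment it is covered. Finally, for $(e)\Rightarrow(a)$ your plan presumes the realizer in $(e)$ is the one derived from $w$ with its geometric left/right meaning, but $(e)$ only asserts existence of \emph{some} realizer with the stated roots and leaf property; Lemma~\ref{lemma:can-schny}(b) then yields \emph{some} topological order of $\mathcal{S}_1^{-1}\cup\mathcal{S}_2^{-1}\cup\mathcal{S}_3$, and ``recovering $w$'' as that order is precisely what must be proved, not assumed. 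The paper's route here is different: it shows that in any canonical order derived from such a realizer, consecutive vertices must be adjacent, the leaf property yielding a contradiction otherwise.
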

\begin{proof}
  (\ref{enum:one-legged}) $\Longleftrightarrow$ (\ref{enum:outer-edge}):
  For $i = 2,\ldots,n$ we argue that~(\ref{enum:one-legged}) vertex $v_i$
	is not two-legged
	if and only if~(\ref{enum:outer-edge}) holds for~$i$. Indeed,
  $(v_{i-1},v_i)$ is an inner edge in $G_i$ if and only if there are
  boundary edges $(v_i,v_j)$ and $(v_i,v_k)$ with $j, k < i-1$ in
  $G_l$ and $G_r$, respectively. But then $v_i$
	is two-legged by definition.

  (\ref{enum:outer-edge}) $\Longleftrightarrow$ (\ref{enum:two-back}):
  Since $v_n$ is an outer vertex and $G_n = G$, (\ref{enum:outer-edge})
  holds for $i = n$ if and only if $v_{n-1}$ is an outer vertex. For
  $i = 2,\ldots,n-1$ we argue that~(\ref{enum:outer-edge}) holds for $i$
  if and only if~(\ref{enum:two-back}) holds for $i-1$. Let $v_i^l$,
  respectively $v_i^r$, denote the third vertex in the inner facial
  triangle containing the edge $(v_{i-1},v_i)$ in $G_l$, respectively
  $G_r$. (Both triangles exist, since $(v_{i-1},v_i)$ is an inner edge
  in $G$.) Now $(v_{i-1},v_i)$ is an inner edge in $G_i$ if and only
  if both, $v_i^l$ and $v_i^r$, have a smaller index than $v_{i-1}$,
  which in turn holds if and only if the index of \emph{every}
  neighbor of $v_{i-1}$, different from $v_i$, is smaller than $i-1$.

  (\ref{enum:two-back}) $\Longrightarrow$ (\ref{enum:canonical-ordering}):
  By~(\ref{enum:two-back}) $\{w_1,w_2,w_n\} = \{v_n,v_{n-1},v_1\}$ is
  the outer triangle of $G$, and moreover, $\tilde{G}_3$, which is
  induced by $v_n,v_{n-1},v_{n-2}$, is a triangle. Hence the outer
  boundary of $\tilde{G}_3$ is a simple cycle $C_3$ containing the
  edge $(w_1,w_2)$. In other words, the first condition of a canonical
  ordering is met for $i=4$. Assuming~(\ref{enum:two-back}) and the
  first condition for $i = 4,\ldots,n-1$, we show that the second and
  first condition holds for $i$ and $i+1$, respectively. In the end,
  the second condition holds for $i = n$ since $w_n$ is an outer
  vertex.

  First note that $w_i$ is in the exterior face of $\tilde{G}_{i-1}$
  since $w_n$ lies in the exterior face and the path $w_i,\ldots,w_n$
  is disjoint from vertices in $\tilde{G}_{i-1}$ and the embedding is
  planar. By~(\ref{enum:two-back}) $w_i$ has at least two neighbors in
  $\tilde{G}_{i-1}$. If the neighbors would not form a subinterval of the path
  $C_{i-1} \setminus (w_1,w_2)$, there would be a non-triangular inner
  face in $\tilde{G}_i$, which contains a vertex $w_j$ with $j > i$ in
  its interior. But then the path $w_j,\ldots,w_n$, which is disjoint
  from $\tilde{G}_i$, would start and end in an interior and the
  exterior face of $\tilde{G}_i$, respectively. This again contradicts
  planarity. Thus the second condition of a canonical ordering is
  satisfied for $i$. Moreover $\tilde{G}_i$ is internally
  triangulated, has a simple outer cycle $C_i$ containing the edge
  $(w_1,w_2)$. In other words, the first condition holds for $i+1$.

  (\ref{enum:canonical-ordering}) $\Longrightarrow$ (\ref{enum:two-back}):
  Since $w_1,\dots,w_n$ is a canonical ordering, $(w_1,w_2)$ is an outer edge. In
  particular, $w_2 = v_{n-1}$ is an outer vertex. Clearly $v_1$ has at
  least two neighbors and every neighbor has a larger index,
  i.e.,~(\ref{enum:two-back}) holds for $i=1$.  Moreover, by the second
  condition of a canonical ordering every vertex $v_i = w_{n-i+1}$,
  for $i=2,\ldots,n-2$, has at least two neighbors in $\tilde{G}_{n-i}
  = G \setminus G_i$, which is the subgraph induced by
  $v_n,\ldots,v_{i+1}$.

 (\ref{enum:canonical-ordering})
$\Longrightarrow$~(\ref{enum:Schnyder-wood}): Consider the Schnyder
realizer $(\mathcal{S}_1,\mathcal{S}_2,\mathcal{S}_3)$ of $G$ defined
by the canonical order $w_1,\ldots,w_n$ according to
Lemma~\ref{lemma:can-schny}. For $i=3,\ldots,n-1$ the outer cycle
$C_i$ of $\tilde{G}_i$ consists of the edge $(w_1,w_2)$, the
$w_iw_1$-path $P_1$ in $\mathcal{S}_1$, and the $w_iw_2$-path $P_2$ in
$\mathcal{S}_2$. Due to the counterclockwise order of edges in a
Schnyder realizer, no vertex on $P_1$, respectively $P_2$, has an
incoming inner edge in $\tilde{G}_i$ in $\mathcal{S}_2$, respectively
$\mathcal{S}_1$. Thus considering only edges in $\tilde{G}_i$ every
outer vertex in $\tilde{G}_i$, different from $w_1$, $w_2$, is a leaf
in $\mathcal{S}_1$ or $\mathcal{S}_2$. When in the canonical ordering
vertex $w_{i+1}$ is attached to $\tilde{G}_i$, some vertices on $C_i$
become inner vertices of $\tilde{G}_{i+1}$. Every inner edge in
$\tilde{G}_{i+1}$, which was not an edge in $\tilde{G}_i$ is in
$\mathcal{S}_3$. Thus every inner vertex in $\tilde{G}_i$ is a leaf in
either $\mathcal{S}_1$ or $\mathcal{S}_2$.

 (\ref{enum:Schnyder-wood})
$\Longrightarrow$~(\ref{enum:canonical-ordering}): Consider a
canonical ordering $w_1,w_2,\ldots,w_n$ of $G$ defined by the Schnyder
realizer $(\mathcal{S}_1,\mathcal{S}_2,\mathcal{S}_3)$ according to
Lemma~\ref{lemma:can-schny}. Then $\{w_1,w_2,w_3\}$ is a triangle,
hence $C_3$ consists of the edge $(w_1,w_2)$, the $w_3w_1$-path $P_1$
in $\mathcal{S}_1$, and the $w_3w_2$-path $P_2$ in
$\mathcal{S}_2$. For $i = 4,\ldots,n$ the vertex $w_i$ is attached to
$\tilde{G}_{i-1}$. If $w_{i+1}$ would have no edge to $w_i$ then the
outgoing edge of $w_i$ in $\mathcal{S}_1$ or $\mathcal{S}_2$ is
connected an inner vertex in the $w_iw_2$-path or $w_iw_1$-path,
respectively. But this vertex would then have an incoming edge in
both, $\mathcal{S}_1$ and $\mathcal{S}_2$ -- a contradiction.
\end{proof}

Figure~\ref{fig:Can-Ordering} shows an example of a maximal plane
graph with a one-legged Hamiltonian cycle, the corresponding canonical
ordering, and the Schnyder realizer.

\begin{figure}[htbp]
%\vspace{-0.5cm}
 \centering
 \includegraphics[width=0.7\textwidth]{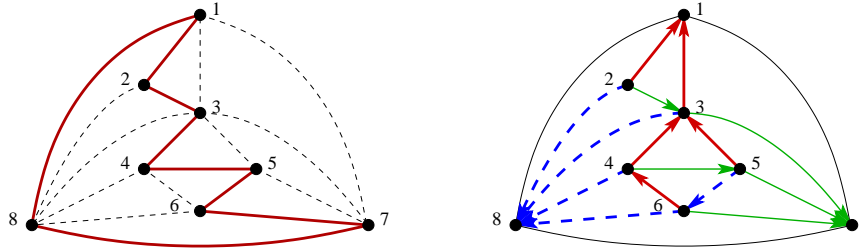}
 \caption{\small\sf A
  graph with a one-legged Hamiltonian cycle and
 the corresponding Schnyder realizer.}
 \label{fig:Can-Ordering}
\end{figure}

Once we have a one-legged Hamiltonian cycle, we can build a
6-sided cartogram via Lemma~\ref{lem:6-sided} in linear
time. Alternately we could obtain from it a Schnyder wood, rooted such
that every vertex is a leaf in $\mathcal{S}_1$ or $\mathcal{S}_2$, and
hence obtain a 6-sided cartogram via the algorithm in
Section~\ref{sec:eight-side}. However, we prefer the construction of Lemma~\ref{lem:6-sided} due to
its linear runtime.

We know that not every Hamiltonian maximally planar graph admits a
one-legged Hamiltonian cycle; for example, the graph in
Figure~\ref{fig:ham-counter} does not even admit a cartogram with
6-gons. However, we believe that some non-trivial subclasses of
Hamiltonian maximally planar graphs are also one-legged Hamiltonian. In particular, we have the following conjecture:

\begin{conjecture}
Every 4-connected maximal planar graph has a one-legged Hamiltonian cycle.
\end{conjecture}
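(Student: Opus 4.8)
The plan is to route the problem through the characterization of Lemma~\ref{lem:Hamilton-property}, which turns it into a purely order-theoretic statement. By the equivalence (\ref{enum:one-legged})$\Leftrightarrow$(\ref{enum:canonical-ordering}) it suffices to show that every $4$-connected maximal planar graph $G$ admits a canonical ordering $w_1,\dots,w_n$ in which every consecutive pair $w_i,w_{i+1}$ is adjacent; reversing such an ordering and using that $w_1,w_2,w_n$ bound the outer triangle (so $(w_1,w_n)$ is an edge that closes the cycle) then yields the desired one-legged Hamiltonian cycle. Equivalently, by~(\ref{enum:Schnyder-wood}), it suffices to produce a Schnyder realizer of $G$ whose sets of non-leaf vertices in $\mathcal{S}_1$ and in $\mathcal{S}_2$ are disjoint. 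I would start from two facts specific to the $4$-connected case: $G$ has no separating triangle, and $G$ is Hamiltonian by Tutte's theorem. That $4$-connectivity is the right hypothesis is strongly suggested by Figure~\ref{fig:ham-counter}, whose obstruction is exactly a nest of separating triangles forcing a two-legged vertex, so forbidding separating triangles should remove the obstruction.

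First I would attack the canonical-ordering formulation directly, building the order in reverse as a \emph{constrained vertex shelling}. A canonical ordering is obtained by repeatedly deleting a \emph{removable} boundary vertex of the current internally triangulated near-triangulation $\tilde G_i$ (a non-base outer vertex whose deletion again leaves a simple outer cycle, i.e.\ whose later neighbors form a contiguous interval). The consecutive-adjacency requirement — equivalently condition~(\ref{enum:two-back}), that each vertex keep at least two higher-indexed neighbors — translates into the demand that the sequence of deleted vertices trace a \emph{path} in $G$: once $w_i$ has been deleted, the next deletion $w_{i-1}$ must be a neighbor of $w_i$ that is removable in $\tilde G_{i-1}$. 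The goal is then to show this constrained shelling can always be completed down to the base edge $(w_1,w_2)$. The main lever is that a triangulation without separating triangles has at least two removable ears on its outer cycle, so there should generically be enough slack to keep following neighbors of the current frontier vertex.

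The hard part — and the reason this remains a conjecture — is that the intermediate near-triangulations $\tilde G_i$ are \emph{not} $4$-connected: deleting boundary vertices creates chords and fresh separating triangles, so the ``at least two ears'' guarantee can collapse to a single forced choice, and that forced vertex may fail to be adjacent to the current frontier, stalling the path. To control this I would seek a strengthened invariant on $\tilde G_i$ — for instance a local condition ensuring that no bad separating triangle meets the frontier, so that a removable neighbor of the frontier vertex always exists — together with a rerouting argument for the case where the only removable neighbor would strand a vertex that then becomes unreachable by the path. Establishing that such an invariant is preserved under the deletions is where I expect the real difficulty to lie.

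As an alternative route I would exploit the Schnyder-realizer formulation~(\ref{enum:Schnyder-wood}) and the distributive lattice structure of the Schnyder realizers of $G$: passing between realizers by reversing suitable directed cycles redistributes the incoming $\mathcal{S}_1$- and $\mathcal{S}_2$-edges, and the aim would be to reach an extremal realizer in which the non-leaf sets of $\mathcal{S}_1$ and $\mathcal{S}_2$ are separated, using $4$-connectivity to exclude the local configurations (again, separating triangles) that force a vertex to have incoming edges in both trees. The analogous obstacle is to verify that such an extremal element achieves \emph{global} disjointness rather than merely a local minimum, so I expect the same structural insight about $4$-connected triangulations to be needed whichever formulation one pursues.
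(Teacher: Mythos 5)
The statement you are trying to prove is not a theorem of the paper at all: it is stated and left as an open \emph{conjecture} (and repeated as such in the concluding section), so there is no proof in the paper to compare yours against. Your reduction is sound as far as it goes: invoking the equivalences of Lemma~\ref{lem:Hamilton-property}, the conjecture is indeed the same as asking for a canonical ordering whose reverse traces a Hamiltonian cycle, or for a Schnyder realizer in which every inner vertex is a leaf in $\mathcal{S}_1$ or $\mathcal{S}_2$; this matches the paper's own remark immediately following the conjecture. But a reduction to an equivalent open statement is not a proof, and everything beyond that reduction in your proposal is a plan rather than an argument.

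Concretely, the gap is the core of both routes you sketch. In the constrained-shelling route, the entire difficulty is the ``strengthened invariant'' you postulate on the intermediate graphs $\tilde G_i$: you neither formulate it precisely nor prove it is preserved under vertex deletion, and as you yourself note, $4$-connectivity of $G$ does not transfer to the $\tilde G_i$ (deletions create chords and separating triangles), so the ``two removable ears'' slack can indeed collapse and strand the frontier path --- nothing in the proposal excludes this. Tutte's theorem gives Hamiltonicity of $4$-connected planar graphs, but provides no link between that cycle and canonical orderings, so it does not help here. In the lattice route, the claim that some extremal Schnyder realizer achieves globally disjoint non-leaf sets in $\mathcal{S}_1$ and $\mathcal{S}_2$ is exactly the conjecture restated in lattice language, and you give no mechanism (e.g.\ a cycle-reversal argument with a monotone potential) that would force it. In short: your proposal correctly identifies the equivalent formulations and the genuine obstacles, but it supplies no proof of any of them; the statement remains as open after your write-up as before it.
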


Note that by Lemma~\ref{lem:Hamilton-property}, the conjecture is
equivalent to asking whether every 4-connected maximal graph has
a Hamiltonian cycle such that taking the vertices in this order gives
a canonical ordering.  Such a result might be of use for other graph
problems as well.

\section{Conclusion and Open Problems}
\label{sec:conc}

We presented a cartogram construction for maximal planar graphs with optimal polygonal complexity. For the precise realization of the actual cartogram this approach requires numerical iteration. Even though the simple heuristic works well in practice, a natural open problem is whether everything can be computed with an entirely combinatorial linear-time approach.

We also presented such an entirely combinatorial linear-time
construction for Hamiltonian maximal planar graphs and showed that the
resulting 8-sided cartograms are optimal. Finally, we showed that if the graph admits a one-legged Hamiltonian cycle (for example outer-planar graphs), only 6 sides are needed. It remains to identify larger classes
of planar graphs which are one-legged Hamiltonian and thus have 6-sided
cartograms. We conjecture that 4-connected maximal planar graphs have
this property.

All of the constructions in this paper yield area-universal
rectilinear duals with optimal polygonal complexity. While Eppstein
{\em et al.}~\cite{EMVS} characterized area-universal {\em rectangular}
layouts, a similar characterization remains an open problem for general area-universal
{\em rectilinear} layouts.

For some classes of graphs the unweighted and weighted versions of the
problem have the same polygonal complexity, as in the case of general
planar graphs where we have shown that the tight bound of 8-sided for
weighted graphs matches the tight bound for unweighted graphs. On the
other hand, Hamiltonian maximal planar graphs have a tight bound of 6 in the
unweighted case, while we have shown that the tight bound is 8 in the
weighted case. It would be interesting to study when the weighted
version of the problem increases the polygonal complexity.

In a similar vein, rectilinear representations are often desirable for
practical and technical reasons (e.g., for VLSI layout or
floor-planning). Sometimes, insisting on rectilinear representation
increases the underlying polygonal complexity. For example, general
(unweighted) planar graphs can be represented by 6-sided polygons (tight
bound) while 8 are needed in the rectilinear case. For the weighted
version, we also now know that 8 sided are sufficient in the
rectilinear case, but can we improve this to 7 sides if we do not
insist on rectilinear layouts?

%\newpage

\bibliographystyle{abbrv}
{
\bibliography{stephen}
}

\end{document}